\title{\vspace{-1cm}Approximating Graphic TSP by Matchings\thanks{This research
was supported by ERC Advanced investigator grant 226203.}}
\author{Tobias M\"{o}mke and Ola Svensson \\
 Royal Institute of Technology - KTH, Stockholm, Sweden \\
  {\tt \{moemke,osven\}@kth.se }}
\newtheorem{theorem}{Theorem}[section]
\newtheorem{lemma}[theorem]{Lemma}
\newtheorem{claim}[theorem]{Claim}
\newtheorem{definition}[theorem]{Definition}
\newenvironment{proofclaim}{\begin{trivlist}
    \item[\hskip\labelsep {\it Proof of Claim}.]}{\QED \end{trivlist}}
\newenvironment{proof}{\begin{trivlist}
    \item[\hskip\labelsep {\bf Proof}.]}{\QED \end{trivlist}}
\newenvironment{proofof}[1]{\begin{trivlist}
    \item[\hskip\labelsep {\bf Proof of #1.}]}{\QED \end{trivlist}}
\newenvironment{applemma}{\begin{trivlist}
        \item[\hskip\labelsep {\bf Lemma}]}{ \end{trivlist}}
\newenvironment{apptheorem}{\begin{trivlist}
        \item[\hskip\labelsep {\bf Theorem}]}{ \end{trivlist}}
\newcommand{\QED}{\hfill $\square$}
\newcommand{\hide}[1]{
}
\newcounter{myclaim}
\newcommand{\MS}{\ensuremath{\mbox{removable pairing}}\xspace}
\newcommand{\LP}[1]{\ensuremath{LP(#1)}}
\newcommand{\OLP}[1]{\ensuremath{OPT_{LP}(#1)}}
\newcommand{\oa}[1]{\ensuremath{\overrightarrow{#1}}}
\newcommand{\TSP}{graph-TSP\xspace}
\newcommand{\HPP}{graph-TSPP\xspace}
\newcommand{\dist}[1]{\ensuremath{dist(#1)}}
\begin{document}
\maketitle

\begin{abstract}
  We present a framework for approximating the metric TSP based on a
  novel use of matchings. Traditionally, matchings have been used to
  add edges in order to make a given graph Eulerian, whereas our
  approach also allows for the removal of certain edges leading to a
  decreased cost.

  For the TSP on graphic metrics (\TSP), the approach yields a
  $1.461$-approximation algorithm with respect to the
  Held-Karp lower bound. For \TSP restricted to a class of graphs that
  contains degree three bounded and claw-free graphs, we show that the
  integrality gap of the Held-Karp relaxation matches the conjectured
  ratio $4/3$. The framework allows for generalizations in a natural way and
  also leads to a $1.586$-approximation algorithm for the traveling
  salesman path problem on graphic metrics where the start and end vertices are
  prespecified.
\end{abstract}

\section{Introduction}\label{sec:intro}
    The traveling salesman problem in metric graphs is one of most
    fundamental NP-hard optimization problems. In spite of a vast amount
    of research several important questions remain open. While the problem
    is known to be APX-hard and NP-hard to approximate with a ratio better than
    $220/219$ \cite{PV06}, the best upper bound is still the
    1.5-approximation algorithm obtained by Christofides~\cite{Chr76}
    more than three decades ago.
    A promising direction to improve this approximation guarantee, has
    long been to understand the power of a linear program known as the
    Held-Karp relaxation~\cite{HK70}. On the one hand, the best lower
    bound on its integrality gap (for the symmetric case) is $4/3$ and
    indeed conjectured to be tight \cite{Goe95}. On the other hand, the
    best known analysis~\cite{SW90, Wol80} is based on Christofides'
    algorithm and gives an upper bound on the integrality gap of $1.5$.

    In the light of this difficulty of even determining the integrality gap of
    the Held-Karp relaxation, a reasonable way to approach the metric TSP
    is to restrict the set of feasible inputs. One promising candidate is
    the \emph{\TSP,} that is, the traveling salesman problem where
    distances between cities are given by any graphic metric, i.\,e., the
    distance between two cities is the length of the shortest path in a
    given (unweighted) graph. Equivalently, \TSP can be formulated as the
    problem of finding an Eulerian multigraph within an unweighted input
    graph so as to minimize the number of edges. In contrast to TSP on
    Euclidean metrics that admits a PTAS~\cite{Arora98,Mitch99}, the \TSP
    seems to capture the difficulty of the metric TSP in the sense that,
    as stated in \cite{GKP95}, it is APX-hard and the lower bound 4/3 on
    the integrality gap of the Held-Karp relaxation is established using a
    \TSP instance.

    The TSP on graphic metrics has recently drawn considerable attention.
    In 2005, Gamarnik et al.~\cite{GLS05} showed that for cubic
    3-edge-connected graphs, there is an approximation algorithm achieving
    an approximation ratio of $1.5-5/389$. This result was generalized to
    cubic graphs by Boyd et al.~\cite{BSSS11}, who obtained an improved
    performance guarantee of 4/3.  For subcubic graphs, i.\,e., graphs of
    degree at most $3$, they also gave an 7/5-approximation algorithm with
    respect to the Held-Karp lower bound.  In a major achievement, Gharan
    et al.~\cite{GSS11} recently presented an approximation algorithm for
    \TSP with performance guarantee strictly better than 1.5. The approach
    in~\cite{GSS11} is similar to that of Christofides in the sense that
    they start with a spanning tree and then add a perfect matching of
    those vertices of odd-degree to make the graph Eulerian. The main
    difference is that instead of starting with a minimum spanning tree,
    their approach uses the solution of the Held-Karp relaxation to sample
    a spanning tree. Although the proposed algorithm in~\cite{GSS11} is
    surprisingly simple, the analysis is technically involved and several novel
    ideas are needed to obtain the improved performance guarantee
    $1.5-\epsilon$ for an $\epsilon$ of the order $10^{-12}$.

\paragraph{Our Results and Overview of Techniques.}
    We propose an alternative framework for approximating the metric
    TSP and use it to obtain  an improved approximation
    algorithm for \TSP.
    \begin{theorem}\label{thm:approximationratio}
      There is a polynomial time approximation algorithm for \TSP with
      performance guarantee $\frac{14\cdot ( \sqrt{2}-1)}{12\cdot
        \sqrt{2}-13} < 1.461$.
    \end{theorem}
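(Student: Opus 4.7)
My plan is to refine Christofides' framework so that the parity-correcting matching plays a dual role: besides adding edges to fix odd degrees, it also justifies the removal of certain edges from an underlying connected spanning subgraph, thereby saving cost over the straightforward $1.5\,\mathrm{OPT}$ bound. The headline ratio $\frac{14(\sqrt{2}-1)}{12\sqrt{2}-13}$ strongly suggests that the algorithm is driven by a single parameter whose optimizer is $\sqrt{2}-1$, so I would keep the construction parametric and defer the numerical choice to the end.

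First I would take an optimal extreme-point solution $x^\ast$ of the Held-Karp LP, of value $\mathrm{OPT}_{LP}$, and use it to extract an integral connected spanning subgraph $H$ together with a distinguished set $R \subseteq E(H)$ of \emph{removable} edges. The structural heart of the argument is a lemma asserting that $R$ admits a \emph{removable pairing}: a pairing of the edges of $R$ with other edges of $H$ so that, for every $T$-join $M$ on the odd-degree vertices of $H \setminus R$, one can locally reroute pieces of $M$ across the pairing to obtain an Eulerian spanning multigraph of cost at most $|E(H)| - |R| + |M|$. The pairing simultaneously certifies that deleting $R$ preserves reachability and that no deletion creates an odd-degree vertex the matching cannot fix. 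Producing $H$ and the pairing from $x^\ast$, with quantitative bounds on $|E(H)|$ and $|R|$, is where I expect the real work to live.

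With such $(H,R)$ in hand, the rest is classical: compute a minimum $T$-join $M$ on the odd-degree vertices of $H \setminus R$, bound its cost by $\mathrm{OPT}_{LP}/2$ via the standard argument that $x^\ast/2$ is feasible for the $T$-cut LP, construct the Eulerian multigraph guaranteed by the pairing, and shortcut to a Hamiltonian cycle. If the construction yields $|E(H)| \le (1+\alpha)\,\mathrm{OPT}_{LP}$ and $|R| \ge \beta(\alpha)\,\mathrm{OPT}_{LP}$ for an appropriate coupled pair of parameters, the resulting ratio takes the form $1+\alpha - \beta(\alpha) + 1/2$. Minimizing over $\alpha$ subject to the feasibility constraints of the construction is then a one-variable calculus problem whose stationary point I would expect to be $\alpha^\ast = \sqrt{2}-1$, producing the claimed bound.

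The principal obstacle will be establishing the structural lemma at the right quantitative strength. A removable pairing must be permissive enough to exist in abundance inside any $H$ extracted from $x^\ast$, yet restrictive enough that the matching step can always absorb it without creating new parity defects or disconnecting components. Getting this definition exactly right — and then proving both an existence bound on $|R|$ and a tight upper bound on $|E(H)|$ that together clear the $1.461$ threshold — is, I expect, where essentially all of the technical difficulty lies; the parameter optimization itself is mechanical once the trade-off curve is in hand.
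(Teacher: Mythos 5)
You have correctly anticipated the headline idea --- a removable pairing that lets a matching both correct parity and license the deletion of certain edges --- and the general shape of the analysis (a parametric trade-off involving $\sqrt{2}$). However, the specific mechanism you propose cannot clear the $3/2$ barrier, and the paper's actual route is quite different. Your cost formula $|E(H)|-|R|+|M|$ with $|M| \le \OLP{G}/2$ presupposes that \emph{all} of $R$ is deleted from $H$ before patching parity with a $T$-join; then $H\setminus R$ must remain connected, which forces $|E(H)| \ge n-1+|R|$. Writing $|E(H)|=(1+\alpha)\OLP{G}$ and $|R|=\beta\OLP{G}$ and using $\OLP{G}\ge n$, this gives $\alpha \ge \beta - o(1)$, so your ratio $3/2+\alpha-\beta$ is at least $3/2-o(1)$. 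The saving from deletion is exactly eaten by the slack in $|E(H)|$ required to survive the deletion, so a Christofides-style $T$-join analysis cannot benefit from removability in this form.

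The paper sidesteps this by never deleting all of $R$ and by never computing a $T$-join. It replaces high-degree vertices by gadgets to obtain a cubic $2$-edge-connected auxiliary graph, then (via Naddef--Pulleyblank and an algorithmic Carath\'eodory theorem, Lemma~\ref{lem:matching}) samples a perfect matching in which each edge of $G$ appears with probability exactly $1/3$. A matched edge is \emph{doubled} if it is non-removable and \emph{deleted} if it is removable; the result is Eulerian by construction, and connectivity is preserved because a matching touches at most one edge of each pair. In expectation this yields $|E|+\tfrac13(|E|-|R|)-\tfrac13|R| = \tfrac43|E|-\tfrac23|R|$ edges (Theorem~\ref{thm:main}), so the base constant is $4/3$ rather than $3/2$, and removability only improves it. The parts you flag as ``the real work'' are also different from what you sketch: the removable pairing is extracted from a min-cost circulation on a DFS-tree-based network over the support of an extreme-point LP solution (Lemma~\ref{lemma:costcirc}); the $\sqrt{2}$ arises as the optimizer $T = 2-\sqrt{2}$ in a knapsack-type bound on the circulation cost (Lemma~\ref{lemma:circcost}), not as a free design parameter; and the final $1.461$ is the crossover of the new algorithm's ratio with Christofides', obtained by running both and keeping the better one as a function of $\OLP{G}/n$ --- a balancing step your proposal does not have.
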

    The result implies an upper bound on the integrality gap of the Held-Karp
    relaxation for \TSP that matches the approximation ratio.
    For the restricted class of graphs, where each block (i.\,e., each
    maximally 2-vertex-connected subgraph) is either claw-free or of
    degree at most $3$, we use the framework to construct a polynomial
    time $4/3$-approximation algorithm showing that the conjectured
    integrality gap of the Held-Karp relaxation is tight for those
    graphs. In fact, the techniques allow us to prove the tight result
    that any $2$-vertex-connected graph of degree at most $3$ has a
    spanning Eulerian multigraph with at most $4n/3 - 2/3$ edges, which
    settles a conjecture of Boyd et al.~\cite{BSSS11} affirmatively.

    Our framework is based on earlier works by Frederickson \&
    Ja'ja'~\cite{FJJ89} and Monma et al.~\cite{MMP90}, who related the
    cost of an optimal tour to the size of a minimum $2$-vertex-connected
    subgraph.  More specifically, Monma et al. showed that a
    $2$-vertex-connected graph $G=(V,E)$ always has a spanning Eulerian
    multigraph with at most $\frac{4}{3} |E|$ edges, generalizing a
    previous result of Frederickson \& Ja'ja' who obtained
    the same result for the special case of planar $2$-vertex-connected
    graphs.
    One interpretation of their approaches is the following.  Given a
    $2$-vertex-connected graph $G=(V,E)$, they show how to pick a random
    subset $M$ of edges satisfying: (i) an edge is in $M$ with probability
    $1/3$ and (ii) the multigraph $H$ with vertex set $V$ and edge set $E
    \cup M$ is spanning and Eulerian. From property $(i)$ of
    $M$, the expected number of edges in $H$ is $\frac{4}{3} |E|$ yielding
    their result.

    Although the factor $4/3$ is asymptotically tight for some classes of
    graphs (one example is the family of integrality gap instances for the
    Held-Karp relaxation described in Section~\ref{sec:prelim}), the bound
    rapidly gets worse for $2$-vertex-connected graphs with significantly
    more than $n$ edges. The novel idea to overcome this issue is the
    following.  Instead of adding all the edges in $M$ to $G$, some of the
    edges in $M$ might instead be removed from $G$ to form $H$. As long as
    the removal of the edges does not disconnect the graph, this will
    again result in a spanning Eulerian multigraph $H$. To specify a
    subset $R$ of edges that safely may be removed we introduce, in
    Section~\ref{sec:tspframework}, the notion of a ``\MS''. The framework
    is then completed by Theorem~\ref{thm:main}, where we show that a
    $2$-vertex-connected graph $G=(V,E)$ with a set $R$ of removable edges
    has a spanning Eulerian multigraph with at most $\frac{4}{3} |E| - \frac{2}{3} |R|$
    edges.

    In order to use the framework, one of the main challenges is to find a
    sufficiently large set of removable edges. In
    Section~\ref{sec:circulation}, we show that this problem can be
    reduced to that of finding a min-cost circulation in a certain
    circulation network.  To analyze the circulation network we then (in
    Section~\ref{sec:algorithms}) use several properties of an extreme point solution to
    the Held-Karp relaxation to obtain our main algorithmic result. The
    better approximation guarantees for special graph classes follows from
    that the circulation network has an easier structure in these cases,
    which in turn allows for a better analysis.

    Finally, we note that the techniques generalize in a natural way. Our
    results can be adapted to the more general traveling salesman path
    problem (\HPP) with prespecified start and end vertices to improve on the
    approximation ratio of $5/3$ by Hoogeveen \cite{Hoo91} when
    considering graphic metrics. More specifically, we obtain the
    following.
    \begin{theorem}\label{thm:approximationratiohpp}
        For any $\varepsilon > 0$, there is a polynomial time approximation algorithm for
        \HPP with performance guarantee 
        $ 3-\sqrt{2}  + \varepsilon < 1.586+\varepsilon$.

        If furthermore each block of the given graph is degree three bounded, there is a
        polynomial time approximation algorithm for \HPP with performance
        guarantee $1.5 + \varepsilon$, for any $\varepsilon>0$.
    \end{theorem}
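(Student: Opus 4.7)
The plan is to adapt the matching-based framework of Sections~\ref{sec:tspframework}--\ref{sec:algorithms} to the $s$-$t$ path setting. The main structural change is that the spanning multigraph we produce must have odd degree at the prespecified endpoints $s,t$ and even degree everywhere else, so that it admits an Eulerian $s$-$t$ walk rather than an Eulerian tour. The target ratio $3-\sqrt{2}\approx 1.586$ suggests that the same cost-versus-removable-edges tradeoff as in \TSP will be used, but rebalanced to reflect the extra parity slack at the two distinguished vertices. First I would set up the natural path-variant of the Held-Karp LP: degree~$1$ at $s,t$, degree~$2$ elsewhere, cut constraint $x(\delta(S))\ge 2$ for every $S$ with $|S\cap\{s,t\}|\ne 1$, and $x(\delta(S))\ge 1$ when $S$ separates $s$ from $t$. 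Given an extreme point $x^*$ of this LP, the structural facts about its support used in Section~\ref{sec:algorithms} should carry over after treating $s,t$ as the two exceptional ``demand'' vertices.

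The core step is a path version of Theorem~\ref{thm:main}: given a $2$-vertex-connected graph $G=(V,E)$ equipped with a \MS, produce a connected spanning multigraph whose odd-degree set is exactly $\{s,t\}$ and whose size is at most $\frac{4}{3}|E|-\frac{2}{3}|R|$ up to an additive correction. My approach is to restrict the matching/orientation argument used for the tour case to $V\setminus\{s,t\}$ and then splice on a single shortest $st$-path to fix the parity at the endpoints. The additive $\dist{s,t}$ overhead is absorbed into $\varepsilon\cdot\mathbf{1}^\top x^*$, either directly using $\dist{s,t}\le\mathbf{1}^\top x^*$ or, if needed, by a standard ``guess a small set of \opt-edges'' preprocessing that naturally produces the $+\varepsilon$ slack appearing in the theorem statement.

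To supply a large removable pairing I would reuse the min-cost circulation reduction of Section~\ref{sec:circulation}, with the supply and demand at $s$ and $t$ shifted to reflect their required odd parity; combining this with the adjusted analysis of Section~\ref{sec:algorithms} and re-solving the resulting two-parameter tradeoff yields the ratio $3-\sqrt{2}$. For the refined result on graphs whose blocks are degree three bounded, the circulation network collapses to the easier structure already exploited for subcubic \TSP earlier in the paper, and a block-by-block argument together with the path-version of Theorem~\ref{thm:main} yields $1.5+\varepsilon$.

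The main obstacle I anticipate is the path version of Theorem~\ref{thm:main}: forcing $\{s,t\}$ to be the odd-degree set breaks the clean symmetry that underlies the original removable-pairing argument, so either the definition of a \MS or the final parity-fixing matching has to be modified so that the edge removals and additions simultaneously respect the $\{s,t\}$-odd parity and preserve connectivity of $H$. A secondary difficulty is verifying that the extreme-point structural properties used in Section~\ref{sec:algorithms} still hold for the path LP when $s$ and $t$ behave as degree-one vertices, which is what ultimately pins down where the $\varepsilon$ enters the bound and whether it can in fact be removed.
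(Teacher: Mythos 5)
Your high-level reduction (path Held--Karp LP, the $2$-vertex-connected reduction, the circulation-based removable pairing, and a path version of Theorem~\ref{thm:main}) matches the paper's architecture, and you correctly identify the parity-fixing at $s,t$ as the crux. However, there is a concrete gap in how you propose to deal with the shortest $s$-$t$ path, and it is exactly where the ratio $3-\sqrt{2}$ comes from.

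You claim the additive $\dist{s,t}$ overhead can ``be absorbed into $\varepsilon\cdot\mathbf{1}^\top x^*$, either directly using $\dist{s,t}\le \mathbf{1}^\top x^*$ or by a guess-a-small-set preprocessing.'' That cannot work: $\dist{s,t}$ can be as large as $\Theta(n)=\Theta(\OLP{G,s,t})$ (e.g.\ when $s,t$ are near-antipodal), so it is a constant fraction of the objective, not an $\varepsilon$-term. The $+\varepsilon$ in the theorem does \emph{not} come from absorbing the path; it comes from the $\pm 1$ slack between $\OLP{G,s,t}$ and $\OLP{G'}$ where $G'=G+\{s,t\}$, and from handling instances below a size threshold $n_0$ exactly. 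The paper instead proves a generalized Theorem~\ref{thm:main} by adding the edge $e'=\{s,t\}$ to $G$, running the removable-pairing machinery on $G'$, and observing that $e'$ needs to be replaced by a shortest path only with probability $1/3$ (when $e'\in M$), so the extra cost is $\dist{s,t}/3$ rather than $\dist{s,t}$. Crucially, this $\dist{s,t}/3$ term is then \emph{balanced against a second algorithm}: double all spanning-tree edges except the unique $s$-$t$ tree path, which yields a tour of at most $2(n-1)-\dist{s,t}$ edges and hence improves as $\dist{s,t}$ grows. The approximation ratio is the minimum of the two, maximized over the two free parameters $d=\dist{s,t}/n$ and $\zeta=(\OLP{G'}-1)/n$, and the worst case $d=\sqrt{2}-1$ gives $3-\sqrt{2}$. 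Without this second algorithm your analysis would not close, and you would land at something strictly worse than $3-\sqrt{2}$. For the degree-three-bounded case the paper uses the simpler bound $\dist{s,t}\le n/2$ (two vertex-disjoint $s$-$t$ paths in a $2$-vertex-connected graph) together with the $n-1$ lower bound, again with $\varepsilon$ covering small $n$.

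A secondary, smaller point: your plan to ``restrict the matching/orientation argument to $V\setminus\{s,t\}$'' then splice a path is a genuinely different route from the paper's add-$e'$-then-remove trick, and you flag it yourself as the main obstacle. The paper's trick neatly sidesteps the parity and connectivity complications you anticipate, and also produces the factor $1/3$ on $\dist{s,t}$ that your deterministic splice would not. I would recommend adopting the $e'$-augmentation approach rather than trying to repair the restricted-vertex-set argument.
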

    The generalization to the traveling salesman problem is presented in
    Section~\ref{sec:tspp}.

\section{Preliminaries}\label{sec:prelim}
    \vspace{-0.1cm}
    \paragraph{Held-Karp Relaxation.} The linear program known as the
    Held-Karp (or subtour elimination) relaxation is a well studied lower
    bound on the value of an optimal tour. It has a variable $x_{\{u,v\}}$
    for each pair of vertices with the intuitive meaning that
    $x_{\{u,v\}}$ should take value $1$ if the edge $\{u,v\}$ is used in
    the tour and $0$ otherwise. Letting $G=(V,E)$ be the complete graph on
    the set of vertices and $c_{\{u,v\}}$ be the distance between vertices
    $u$ and $v$, the Held-Karp relaxation can then be formulated as the
    linear program where we wish to  minimize $\sum_{e \in E} c_{e} x_{e}$ subject to
    $$  x(\delta(v)) =
      2 \mbox{    for }v\in V  \mbox{,}  \qquad x(\delta(S)) \geq 2 \mbox{   for } \emptyset \neq S \subset V, \qquad \mbox{ and $x\geq 0$},
    $$
    where $\delta(S)$ denotes the set of edges crossing the cut $(S, \bar S)$ and
    $x(F) = \sum_{e\in F} x_e$ for any $F\subseteq E$.

    Goemans \& Bertsimas~\cite{GB90} proved that for metric distances the
    above linear program has the same optimal value as the linear program
    obtained by dropping the equality constraints. Moreover, when
    considering a \TSP instance $G=(V,E)$ we only need to consider the
    variables $(x_{e})_{e\in E}$. Indeed, any solution $x$ to the
    Held-Karp relaxation without equality constraints such that $x_{\{u,v\}}>0$
    for a pair of vertices $\{u,v\} \not \in E$ can be transformed into a
    solution $x'$ with no worse cost and $x'_{\{u,v\}} = 0$ by setting
    $x'_{e} = x_{e} + x_{\{u,v\}}$ for each edge on the shortest path
    between $u$ and $v$, and $x_e' = x_e$ for the 
    other edges.  The Held-Karp relaxation for \TSP on a graph $G=(V,E)$
    can thus be formulated as follows:
    $$
     \min \sum_{e \in E} x_{e} \qquad \mbox{ subject to}
    \qquad x(\delta(S)) \geq 2 \mbox{ for } \emptyset \neq S \subset V,\qquad \mbox{and } x\geq 0.
    $$
    We shall refer to this linear program as \LP{G} and denote the value
    of an optimal solution by \OLP{G}.  Its integrality gap was previously known to
    be at most $3/2- \epsilon$ and at least $4/3$ for graphic instances.
    The lower bound
    is obtained by a claw-free graphic instance of degree at most $3$ that
    consists of three paths of equal length with endpoints $(s_1, t_1),
    (s_2, t_2),$ and $(s_3,t_3)$ that are connected so as $\{s_1, s_2,
    s_3\}$ and $\{t_1, t_2, t_3\}$ form two triangles (see Figure~\ref{fig:intgap}).

    We end our discussion of \LP{G} with a useful observation. When
    considering \TSP{}, it is intuitively clear that we can restrict
    ourselves to \emph{$2$-vertex-connected} graphs, i.\,e., graphs that
    stay connected after deleting a single vertex. Indeed, if we consider
    a graph with a vertex $v$ whose removal results in components $C_1,
    \ldots, C_\ell$ with $\ell >1$ then we can recursively solve the
    \TSP{} problem on the $\ell$ subgraphs $G_1, G_2, \dots, G_\ell$
    induced by $C_1 \cup \{v\}, C_2 \cup \{v\}, \dots, C_\ell \cup
    \{v\}$. The union of these solutions will then provide a solution to
    the original graph that preserves the approximation guarantee with
    respect to the linear programming relaxation since one can see that
    $\OLP{G} \geq \sum_{i=1}^\ell \OLP{G_i}$. We summarize this
    observation in the following lemma (see Appendix~\ref{app:2connTSP}
    for a fullproof).
    \begin{lemma}
    \label{lemma:2connTSP}
      Let $G$ be a connected graph. If there is an $r$-approximation
      algorithm for \TSP{} on each $2$-vertex-connected subgraph $H$ of $G$ (with
      respect to \OLP{H}) then there is an $r$-approximation algorithm for \TSP on 
      $G$ (with respect to \OLP{G}).
    \end{lemma}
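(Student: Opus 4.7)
The plan is to induct on $|V(G)|$. If $G$ is already $2$-vertex-connected, the hypothesis gives the conclusion directly. Otherwise $G$ has a cut vertex $v$, and the components of $G\setminus\{v\}$ give the decomposition $C_1,\dots,C_\ell$ with $\ell\ge 2$ and corresponding subgraphs $G_i$ induced on $C_i\cup\{v\}$. Each $G_i$ is connected, has strictly fewer vertices than $G$, and is a subgraph of $G$, so by induction there is an $r$-approximate tour $T_i$ in $G_i$ (with respect to $\OLP{G_i}$) whenever each $2$-vertex-connected subgraph of $G_i$ admits an $r$-approximation --- but $2$-vertex-connected subgraphs of $G_i$ are $2$-vertex-connected subgraphs of $G$, so the hypothesis propagates.

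Given the tours $T_1,\dots,T_\ell$, I would glue them at $v$: since $v$ is visited by every $T_i$, one can open each closed walk at $v$ and concatenate them into a single closed walk $T$ in $G$ that spans $V(G)$, with $|T|=\sum_i |T_i|$.

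The main work is the LP bound $\OLP{G}\ge \sum_{i=1}^\ell \OLP{G_i}$. Take an optimal solution $x^*$ of \LP{G} and let $x^{(i)}$ be its restriction to $E(G_i)$. Since $v$ is a cut vertex, the edge sets $E(G_1),\dots,E(G_\ell)$ partition $E(G)$, so $\sum_i \sum_{e\in E(G_i)} x^{(i)}_e = \sum_{e\in E(G)} x^*_e = \OLP{G}$. It therefore suffices to check that each $x^{(i)}$ is feasible for \LP{G_i}. Pick any $\emptyset\ne S\subsetneq V(G_i)$. By possibly replacing $S$ with $V(G_i)\setminus S$ (which does not change $\delta_{G_i}(S)$), I may assume $v\notin S$, so $S\subseteq C_i$. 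Because $C_i$ is a component of $G\setminus\{v\}$, the only edges of $G$ leaving $S$ go into $C_i\cup\{v\}$, hence into $G_i$; that is, $\delta_G(S)=\delta_{G_i}(S)$. Therefore $x^{(i)}(\delta_{G_i}(S)) = x^*(\delta_G(S)) \ge 2$, confirming feasibility.

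Combining the pieces, $|T|=\sum_i |T_i| \le r\sum_i \OLP{G_i} \le r\cdot \OLP{G}$, closing the induction. The only subtle step is the LP decomposition, and the subtlety lies entirely in verifying that every subtour-elimination cut in $G_i$ --- including the cut $S=C_i$ separating $v$ from the rest of $G_i$ --- is also a valid cut in $G$, which holds precisely because $v$ is a cut vertex and edges out of $C_i$ can only reach $v$.
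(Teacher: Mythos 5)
Your proof is correct and follows essentially the same approach as the paper's: decompose at a cut vertex $v$, recurse on the block-like pieces $G_i$ induced by $C_i\cup\{v\}$, glue the resulting Eulerian walks at $v$, and bound the cost via the LP decomposition $\OLP{G}\ge\sum_i\OLP{G_i}$ obtained by restricting an optimal solution $x^*$ of $\LP{G}$ to each $E(G_i)$. The only difference is presentational: you spell out the verification $\delta_G(S)=\delta_{G_i}(S)$ (after complementing to get $v\notin S$), whereas the paper states more tersely that every cut constraint of $\LP{G_i}$ already appears in $\LP{G}$.
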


    \vspace{-0.4cm}
\paragraph{Matchings of Cubic $2$-Edge-Connected Graphs.}
    Edmonds~\cite{Edmonds1965b} showed that the following set of
    equalities and inequalities on the variables $(x_e)_{e\in E}$
    determines the perfect matching polytope (i.\,e.,  all extreme points of the polytope are integral and correspond to perfect matchings)
    of a given graph $G=(V, E)$:
    $$
      x(\delta(v))  = 1 \mbox{ for } v\in V,\qquad
      x(\delta(S))  \geq 1 \mbox{ for } S \subseteq V\mbox{ with $|S|$ odd,}\qquad \mbox{ and } 
      x  \geq 0.
    $$
    The linear description is useful for understanding the structure of
    the perfect matchings. For example, Naddef and Pulleyblank~\cite{NP81}
    proved that $x_e = 1/3$ defines a feasible solution when $G$ is
    \emph{cubic} and \emph{$2$-edge connected}, i.\,e., every vertex has
    degree $3$ and the graph stays connected after the removal of an
    edge. They used that result to deduce that such graphs always have a
    perfect matching of weight at least $1/3$ of the total weight of the
    edges.

    Standard algorithmic versions of Carath\'{e}odory's theorem (see
    e.\,g. Theorem~$6.5.11$ in~\cite{GLS1988}) say that, in polynomial time, we can
    decompose a feasible solution to the perfect matching polytope into a
    convex combination of polynomially many perfect matchings (see
    also~\cite{Barahona04} for a combinatorial approach for the matching
    polytope).
    Combining these results leads to the following lemma (see~\cite{BSSS11,GLS05,MMP90} for
    closely related variants that also have  been useful for the
    \TSP problem).
    \begin{lemma}
    \label{lem:matching}
      Given a cubic $2$-edge-connected graph $G$, we can in polynomial time
      find a distribution over polynomially many perfect matchings so that
      with probability $1/3$ an edge is in a  perfect matching picked from this distribution.
    \end{lemma}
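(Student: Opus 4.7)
The plan is to exhibit an explicit point in the perfect matching polytope whose coordinates are all $1/3$, and then apply the algorithmic form of Carath\'eodory's theorem to decompose this point into a convex combination of polynomially many perfect matchings. Under the resulting distribution, the probability that a given edge $e$ lies in the sampled matching equals exactly its fractional value $1/3$, which is the desired conclusion.

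More concretely, I would first verify that $x^*_e := 1/3$ for every $e \in E$ is feasible for the Edmonds constraints listed just above. Nonnegativity is immediate, and since $G$ is cubic, $x^*(\delta(v)) = 3 \cdot (1/3) = 1$ for every $v$. For the odd-set inequality, let $S \subseteq V$ with $|S|$ odd. The parity of $|\delta(S)|$ agrees with the parity of $\sum_{v \in S} \deg(v) = 3|S|$, which is odd, so $|\delta(S)|$ is odd. Because $G$ is $2$-edge-connected, $|\delta(S)| \geq 2$, so in fact $|\delta(S)| \geq 3$ and hence $x^*(\delta(S)) \geq 1$. (This is precisely the Naddef--Pulleyblank observation attributed in the paragraph above.)

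Once $x^*$ is known to lie in the matching polytope, the algorithmic version of Carath\'eodory's theorem cited from Gr\"otschel--Lov\'asz--Schrijver (alternatively Barahona's combinatorial decomposition) produces, in polynomial time, nonnegative weights $\lambda_1, \dots, \lambda_k$ with $\sum_i \lambda_i = 1$ and perfect matchings $M_1, \dots, M_k$ of $G$ such that
$$
x^* \;=\; \sum_{i=1}^{k} \lambda_i \, \chi^{M_i}, \qquad k = \mathrm{poly}(|V|),
$$
where $\chi^{M_i}$ is the characteristic vector of $M_i$. Drawing the random matching $M$ by choosing $M_i$ with probability $\lambda_i$ then gives, for each edge $e$,
$$
\Pr[\,e \in M\,] \;=\; \sum_{i=1}^{k} \lambda_i \, \mathbf{1}[e \in M_i] \;=\; x^*_e \;=\; \tfrac{1}{3},
$$
as required.

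The only step that is not purely off-the-shelf is the feasibility check for $x^*$, and the main subtlety there is using $2$-edge-connectivity together with the parity of $3|S|$ to upgrade the trivial bound $|\delta(S)| \geq 2$ to $|\delta(S)| \geq 3$ on odd cuts; everything else reduces to invoking the polynomial-time decomposition theorem for the matching polytope.
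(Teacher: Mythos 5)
Your proposal is correct and follows exactly the route the paper indicates: verify that the all-$1/3$ vector lies in Edmonds' perfect matching polytope (the Naddef--Pulleyblank observation, which you reprove via the parity-plus-$2$-edge-connectivity argument) and then invoke the algorithmic Carath\'eodory decomposition from Gr\"otschel--Lov\'asz--Schrijver or Barahona to get the polynomial-size distribution. This matches the paper's intended proof; no gaps.
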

    Note that all $2$-vertex-connected graphs except the trivial graph on
    $2$ vertices are $2$-edge connected. We can therefore apply the above
    lemma to cubic $2$-vertex-connected graphs.

    \vspace{-0.2cm}
\section{Approximation  Framework}
\label{sec:tspframework}
Lemma~\ref{lemma:2connTSP} says that the technical difficulty in
approximating the \TSP problem lies in approximating those instances
that are $2$-vertex connected.
As alluded to in the introduction, we shall generalize previous
results~\cite{FJJ89,MMP90} that relate the cost of an optimal tour to
the size of a minimum $2$-vertex-connected subgraph. The main
difference is the use of matchings. Traditionally, matchings have been
used to add edges to make a given graph Eulerian whereas our framework
offers a structured way to specify a set of edges that safely may be
removed leading to a lower cost. To identify the set of edges that may
be removed we use the following definition.

\vspace{-0.1cm}
\begin{definition}[Removable pairing of edges] \label{def:pairing}
  Given a $2$-vertex-connected graph $G$ we call a tuple $(R,P)$ consisting of a  subset $R$ of removable edges and  a subset   $P\subseteq R\times R$ of pairs of edges a  \emph{\MS{}} if
\vspace{-0.2cm}
  \begin{itemize}\itemsep-1mm
    \item an edge is in at most one pair;
    \item the edges in a pair are incident to a common vertex of degree at least $3$;
    \item any graph obtained by deleting removable edges so that at most one edge in each pair is deleted stays connected.
  \end{itemize}
\end{definition}

The following theorem generalizes the corresponding result of~\cite{MMP90} (their
result follows from the the special case of an empty removable pairing).
\begin{theorem}
\label{thm:main}
Given a $2$-vertex-connected graph $G=(V,E)$ with a \MS{} $(R,P)$,
there is a polynomial time algorithm that returns a
 spanning Eulerian
multigraph in $G$ with at most
  $\frac{4}{3}\cdot |E| - \frac{2}{3} \cdot |R|$ edges.
\end{theorem}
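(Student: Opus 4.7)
The plan is to follow the blueprint of Monma, Munson, and Pulleyblank~\cite{MMP90}, reducing to Lemma~\ref{lem:matching} via a cubic $2$-edge-connected auxiliary graph, but to repurpose certain matching edges for \emph{deletion} rather than \emph{doubling}. The crux is a parity observation: at every vertex $v$, doubling an incident edge $e$ and deleting $e$ have exactly the same effect on the parity of $v$'s degree. Consequently, whatever mechanism the argument of~\cite{MMP90} uses to certify Eulerian-ness of the output continues to work verbatim after the double$\,\to\,$delete swap on removable edges, and only the counting and the connectivity require fresh verification.

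First I would construct, as in~\cite{MMP90}, a cubic $2$-edge-connected auxiliary graph $G^{\star}$ together with a natural correspondence between $E(G^{\star})$ and $E$ such that a perfect matching of $G^{\star}$ picks a set of $G$-edges whose doubling makes $G$ Eulerian. The additional requirement in our setting is that for every pair $\{e,e'\}\in P$, the two $G^{\star}$-edges corresponding to $e$ and $e'$ share a vertex of $G^{\star}$, so that no perfect matching of $G^{\star}$ can contain both. Because Definition~\ref{def:pairing} guarantees that $e$ and $e'$ share a common endpoint $v\in V$ with $d_G(v)\ge 3$, there is room to splice them into a common gadget inside the cubic blow-up of $v$. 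Lemma~\ref{lem:matching} then yields a polynomially-supported distribution over perfect matchings of $G^{\star}$ in which each edge of $G^{\star}$ appears with probability exactly $1/3$.

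Given a sampled matching $M^{\star}$, let $S\subseteq E$ be the set of $G$-edges whose corresponding $G^{\star}$-edge is in $M^{\star}$; by the pairing construction $|S\cap\{e,e'\}|\le 1$ for every $\{e,e'\}\in P$. Define the multigraph $H$ on vertex set $V$ by including each $e\in E\setminus R$ with multiplicity $1 + \mathbf{1}[e\in S]$ and each $e\in R$ with multiplicity $1 - \mathbf{1}[e\in S]$. By the parity observation above, $H$ is spanning and Eulerian, and by Definition~\ref{def:pairing}, $H$ is connected. Its expected edge count is
\[
    |E\setminus R|\cdot\bigl(1+\tfrac{1}{3}\bigr) + |R|\cdot\bigl(1-\tfrac{1}{3}\bigr) = \tfrac{4}{3}|E| - \tfrac{2}{3}|R|,
\]
and since the distribution is supported on polynomially many matchings, iterating and returning the best candidate gives a deterministic polynomial-time algorithm.

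The main obstacle I anticipate is the construction of $G^{\star}$: making it cubic and $2$-edge-connected, preserving the parity correspondence needed for Eulerian-ness, and simultaneously routing each pair of $P$ through a common vertex. The condition that paired edges meet at a vertex of degree at least $3$ is precisely what provides room for the pairing gadget without breaking cubicity, but confirming $2$-edge-connectivity of $G^{\star}$ after all such splices will require a careful case analysis.
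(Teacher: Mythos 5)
Your proposal is correct and follows essentially the same approach as the paper: both reduce to Lemma~\ref{lem:matching} via a cubic $2$-edge-connected auxiliary graph whose gadgets at high-degree vertices force paired edges onto a common vertex, then exploit the parity symmetry between doubling and deleting an edge to convert the Monma--Munson--Pulleyblank doubling scheme into a double-or-delete scheme, and finally derandomize over the polynomially many matchings in the support. The gadget construction you flag as the main anticipated obstacle is precisely the content of the paper's Lemma~\ref{lemma:sample}, and your worry about $2$-edge-connectivity surviving the splices resolves easily: the gadgets preserve $2$-vertex-connectivity, which implies $2$-edge-connectivity for any graph on more than two vertices.
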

The proof of the  theorem is presented after the following lemma on which it is based.
\begin{lemma}
\label{lemma:sample}
Given a $2$-vertex-connected graph $G=(V,E)$ with a \MS{} $(R,P)$, we
can in polynomial time find a distribution over polynomially many
subsets of edges such that a random subset $M$ from this distribution
satisfies:
\vspace{-0.2cm}
\begin{itemize}\itemsep-1mm
\item[(a)] each edge is in $M$ with probability $1/3$;
\item[(b)] at most one edge in each pair is in $M$; and
\item[(c)] each vertex has an even degree in the multigraph with edge set $E
  \cup M$.
\end{itemize}
\end{lemma}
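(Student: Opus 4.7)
The plan is to reduce the problem to Lemma \ref{lem:matching} by constructing a cubic $2$-edge-connected auxiliary (multi)graph $G'$ whose perfect matchings encode the desired sets $M$. First I would replace each vertex $v \in V$ by a gadget $\Gamma_v$ so that the ``external'' half-edges of $\Gamma_v$ are in bijection with the edges of $G$ incident to $v$. The gadget should be designed so that: (i) for every pair $(e_1,e_2) \in P$ at $v$, the two external half-edges corresponding to $e_1$ and $e_2$ are attached to a single ``pair vertex'' of $\Gamma_v$, forcing any perfect matching of $G'$ to contain at most one of them; (ii) every vertex of $G'$ has degree exactly $3$; (iii) the number of vertices of $\Gamma_v$ has the same parity as $d_G(v)$; and (iv) $G'$ is $2$-edge-connected. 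A concrete realization: for each unpaired edge at $v$, add one gadget vertex with one external and two internal half-edges; for each pair, add a pair vertex carrying the two paired externals plus one internal half-edge joined to an auxiliary ``helper vertex'' whose remaining two internal half-edges plug into the rest of $\Gamma_v$; then stitch the internal half-edges of $\Gamma_v$ into a cycle. The definition's hypothesis that paired edges share a vertex of degree at least $3$ provides the slack needed to accommodate the extra internal edge at each pair vertex.

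Next I would invoke Lemma \ref{lem:matching} on $G'$ to obtain a distribution over polynomially many perfect matchings $M'$ in which each edge of $G'$ appears with probability exactly $1/3$. I then define $M \subseteq E$ by including $e \in E$ precisely when the corresponding external edge of $G'$ lies in $M'$. Property (a) is immediate from the bijection between external edges of $G'$ and edges of $E$, and property (b) follows from construction item (i), since both paired edges are then incident to a single vertex of $G'$. For (c), fix $v$ and observe that each vertex of $\Gamma_v$ is matched exactly once, so summing incidences over $\Gamma_v$ gives
\[
d_M(v) + 2\cdot|\{\text{matched internal edges of } \Gamma_v\}| \;=\; |\Gamma_v|,
\]
which by (iii) yields $d_M(v) \equiv |\Gamma_v| \equiv d_G(v) \pmod 2$, i.e., property (c).

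The main obstacle is the construction of $G'$: designing the gadgets so that they simultaneously enforce the pairing constraint, remain cubic, carry the required vertex-parity, and globally produce a $2$-edge-connected graph. The parity bookkeeping has to be checked in particular when the number of pairs at a vertex is odd, which is precisely why the pair gadget above uses \emph{two} gadget vertices per pair rather than one. The $2$-edge-connectivity of $G'$ requires verifying that vertex expansion introduces no bridge; this follows from the $2$-vertex-connectivity of $G$ together with the fact that each gadget $\Gamma_v$ is internally $2$-edge-connected as a cycle-based construction. Once these structural properties are in place, Lemma \ref{lem:matching} immediately delivers the desired distribution.
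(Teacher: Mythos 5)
Your approach is essentially the same as the paper's: build a cubic $2$-edge-connected auxiliary graph by replacing each vertex with a degree-$3$ gadget that attaches paired edges to a common vertex, invoke Lemma~\ref{lem:matching}, and restrict the sampled perfect matching to $E$; your cycle-plus-pendant-pair-vertex gadget and explicit parity count for (c) replace the paper's tree / chorded-cycle gadgets and its "compressing even-degree vertices preserves even degree" observation, but the skeleton of the argument is identical. One small inaccuracy to fix: $\Gamma_v$ is \emph{not} internally $2$-edge-connected (each pair-vertex-to-helper edge is a bridge inside $\Gamma_v$), so the $2$-edge-connectivity of $G'$ must be argued globally via the two external edges at each pair vertex together with the $2$-vertex-connectivity of $G$, rather than from local $2$-edge-connectivity of the gadgets.
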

\begin{proof}
  We shall use Lemma~\ref{lem:matching} and will therefore need a
  cubic $2$-edge-connected graph. In the spirit of~~\cite{FJJ89}, we replace
  all vertices of $G$ that are not of degree three by gadgets to
  obtain a cubic graph $G'=(V',E')$ as follows (see also
  Figure~\ref{fig:degreplace}):
\begin{itemize}
\item A vertex $v$ of degree 2 with neighbors $u$ and $w$ is replaced
  by a cycle consisting of four vertices $v_N$, $v_W$, $v_S$, $v_E$
  with the chord $\{v_W, v_E\}$. The gadget is then connected to the
  neighbours of $v$ by the the edges $\{u,v_N\}$ and $\{v_S,w\}$.
    
\item A vertex $v$ with $d(v) > 3$ is replaced by a tree $T_v$ that
  has $\lfloor d(v)/2 \rfloor$ leaves, a binary root if $d(v)$ is odd,
  and otherwise only degree $3$ internal vertices. Each leaf is
  connected to two neighbours of $v$ such that the edges incident to $v$
  that form a pair in $P$ are incident to the same leaf.  If $d(v)$ is
  odd, one of the neighbors is left and connected to the binary root.
\end{itemize}

\begin{figure}[bt]
\begin{center}
\includegraphics[width=14cm]{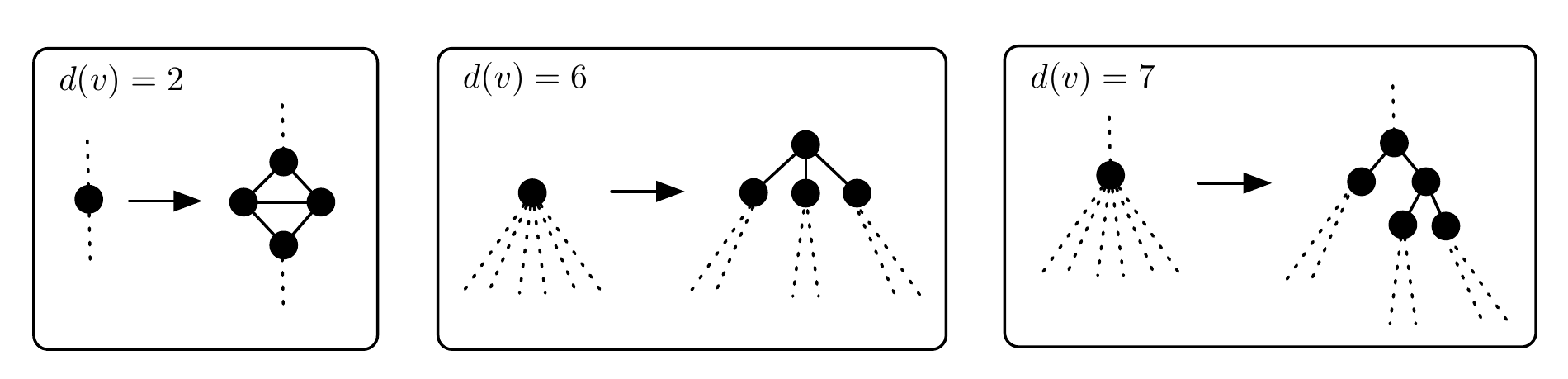}
\end{center}
\caption{Examples of the used gadgets to obtain a cubic graph.}
\label{fig:degreplace}
\end{figure}
The above gadgets guarantee that the graph $G'$ is cubic and it is
$2$-vertex connected since $G$ was assumed to be $2$-vertex connected.
We can therefore apply Lemma~\ref{lem:matching} in order to obtain
a random perfect matching $M'$. Each edge of $G'$ is in $M'$ with a
probability of exactly 1/3. Let $M$ be the set of edges obtained by
restricting $M'$ to the edges of $G$ in the obvious way. Now $M$
contains each edge of $G$ with probability
1/3.
We complete the proof by showing that $M$ also satisfies properties
$(b)$ and $(c)$.  As each pair of edges in $P$ is incident to a vertex
of degree at least $3$, we have, by the construction of the gadgets,
that they are incident to a common vertex in $G'$ and hence at most
one edge of each pair is in $M$. 
Finally, property $(c)$ follows from that $E' \cup M'$ is clearly a spanning Eulerian
multigraph of $G'$ and compressing a set of even-degree vertices results in one vertex of even degree.
\end{proof}

Equipped with the above lemma we are now ready to prove the main result
of this section.
\begin{proofof}{Theorem~\ref{thm:main}}
  Pick a random subset $M \subseteq E$ of edges that satisfies the
  properties of Lemma~\ref{lemma:sample}. Let $M_R$ be the set of those edges of
  $M$ that are removable and let $\bar M_R$ be the set of the remaining edges of
  $M$.

Consider the multigraph $H$ on vertex set $V$ and edge set $E
\setminus M_R \cup \bar M_R$. Observe that both adding an edge and
removing an edge swaps the parity of the degree of an incident
vertex. We have thus from property $(c)$ of Lemma~\ref{lemma:sample}
that the degree of each vertex in $H$ is even. Moreover, as $(R,P)$ is a
removable pairing, property $(b)$ of Lemma~\ref{lemma:sample}
gives that $H$ is connected. Alltogether we have that $H$ is an
Eulerian graph, i.\,e., a \TSP{} solution. We continue to calculate its
expected number of edges, which is
\begin{equation}
\label{eq:nredges}
\mathbb{E}[|E| +   |\bar M_R| -
|M_R|].
\end{equation} 
Using that each edge is in $M$ with probability $1/3$,  we have, by linearity of
expectation, that~\eqref{eq:nredges} equals
$$
|E| + \frac{1}{3} (|E| - |R|) - \frac{1}{3} |R| = \frac{4}{3} \cdot |E| - \frac{2}{3} \cdot |R|.
$$ 

To conclude the proof, we note that the selection of $M$ can be
derandomized since there are, by Lemma~\ref{lemma:sample},
polynomially many edge subsets to choose from; taking the one that
minimizes the number of edges of $H$  is sufficient.
\end{proofof}

\section{Finding a Removable Pairing by Minimum Cost Circulation}\label{sec:circulation}
In order to use our framework, one of the main challenges is to find a
\MS{} that is sufficiently large. In the following, we show how to obtain a
useful \MS{} based on circulations.

Consider a $2$-vertex connected graph $G$ and let $T$ be a spanning
tree of $G$ obtained by depth-first search (starting from some
arbitrary root $r$). Then each edge in $G$ connects a vertex to either
one of its predecessors or one of its successors. We call the edges in
$T$ \emph{tree-edges} and those in $G$ but not in $T$
\emph{back-edges}.

We shall now define a circulation network $C(G,T)$.
We start
by introducing an orientation of $G$: all tree-edges
become tree-arcs directed from the root to the leaves and all 
back-edges become back-arcs directed towards the root. To distinguish the circulation
network and the original graphs, we use the names $\oa{G}$ and $\oa{T}$ for the
network versions of $G$ and $T$.  In order to ensure connectivity properties of
subnetworks obtained from feasible circulations, we replace some of the vertices
by gadgets. 

For each vertex $v$ except the root that has $\ell$ children $w_1,
w_2, \ldots, w_\ell$ in the tree, we introduce $\ell$ new vertices $v_1$,
$v_2$, $\ldots$, $v_\ell$ and replace the tree-arc $(v, w_j)$ by the tree-arcs $(v,v_j)$ and $(v_j, w_j)$ for $j= 1,2, \ldots, \ell$. 
Then we redirect all incoming back-arcs of $v$ from the subtree rooted
by $w_j$ to $v_j$. For an illustration of the gadget see
Figure~\ref{fig:circreplace} and for an example of a complete network
see Figure~\ref{fig:circreplace_appendix}.  This way,
all back-arcs start in old vertices and lead to new vertices or the
root.  In the following, we call the new vertices and the root
\emph{in-vertices} and the remaining old ones \emph{out-vertices}. We
also let $\mathcal{I}$ be the set of all in-vertices.

\begin{figure}[bt]
\begin{center}
  \includegraphics[width=10cm]{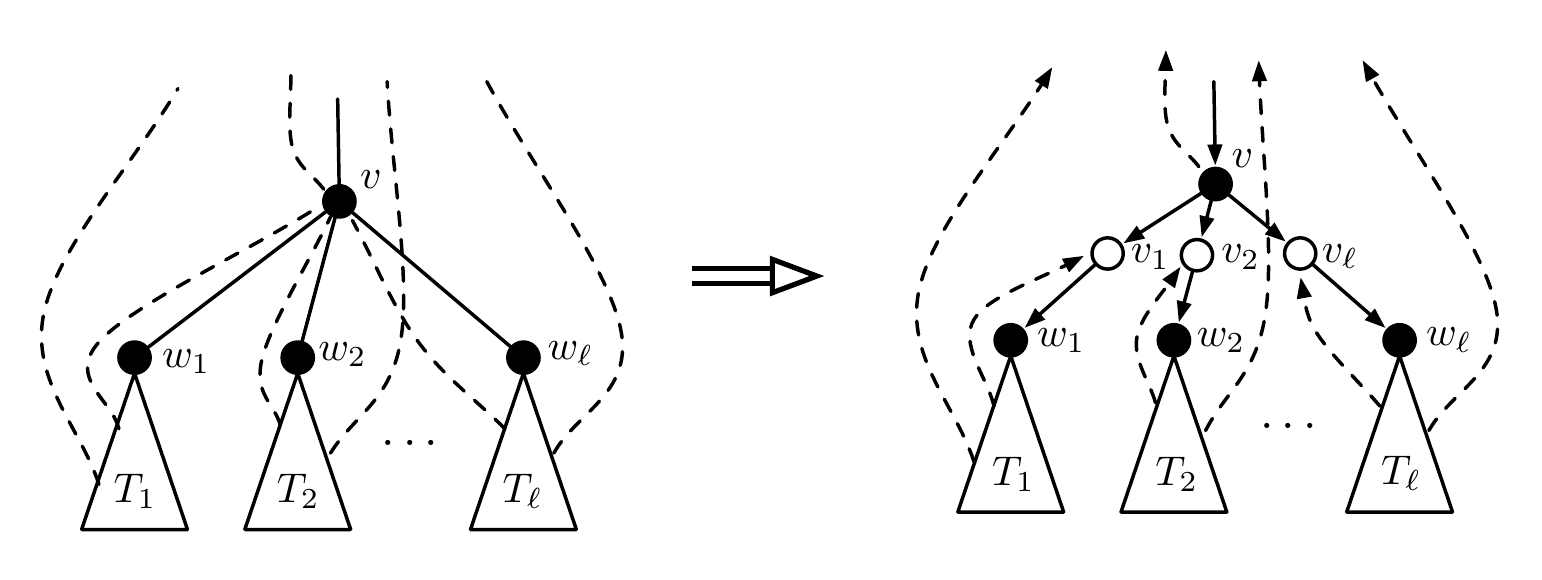}
\end{center}
\caption{The gadget that, for each child of $v$, introduces a new vertex (depicted in
  white) and redirects back-arcs.}
\label{fig:circreplace}
\end{figure}

We now specify
a lower bound (demand) and an upper bound (capacity) on the circulation.
For each arc $a$ in $\oa{T}$, we set the demand of $a$ to 1 and for
all other arcs to
0.
The capacity is $\infty$ for any arc.  Finally, the cost of a circulation $f$
in $C(G,T)$ is the piecewise linear function $\sum_{v\in\mathcal{I}}
\max[f(B(v))-1, 0]$, where $B(v)$ is the set of incoming back-arcs of
$v$. One can think of the cost as the total circulation on the
back-arcs except that each in-vertex accepts a circulation of $1$ for
free.  Note that algorithmically there is no considerable difference
whether we use our cost function or define a linear cost function on
the arcs: for any in-vertex $v$ we can redirect all back-arcs of $v$
to a new vertex $v'$ and introduce two arcs $(v',v)$, one of cost $0$
and capacity $1$ and the other  of cost $1$ and capacity $\infty$. All
remaining arcs then have a cost of $0$.

The following lemma shows how to use a circulation in $C(G,T)$ to approximate
\TSP.
\begin{lemma}
\label{lemma:costcirc}
Given a $2$-vertex connected graph $G$ and a depth first search tree
$T$ of $G$ let $C^*$ be the minimum cost circulation to $C(G,T)$ of
cost $c(C^*)$. Then there is a spanning Eulerian multigraph $G'$ in
$G$ with at most $\frac{4}{3} n + \frac{2}{3} c(C^*) - 2/3$ edges.
\end{lemma}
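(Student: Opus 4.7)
The plan is to apply Theorem~\ref{thm:main} to a subgraph $H \subseteq G$ obtained from $T$ by adding the back-edges that carry positive flow in $C^*$. Compute an integer min-cost circulation $C^*$, let $B'$ denote the set of back-edges whose back-arcs have positive flow, and put $H := T \cup B'$, so $|E(H)| = n - 1 + |B'|$. The removable pairing is then built as follows: for each non-root in-vertex $v_j$ of $C(G,T)$ with $f(B(v_j)) \geq 1$, pick a back-arc into $v_j$ with positive flow and pair the underlying back-edge with the tree-edge $\{v, w_j\}$; every remaining back-edge of $B'$ (in particular the back-edges into the root, where a pair may fail to exist because the root's degree in $H$ could be only $2$) is added to $R$ as an unpaired removable edge.

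The key verification is two-fold. First, for $2$-vertex-connectivity of $H$: the DFS-root $r$ has a unique child in $T$ (a standard consequence of $G$ being $2$-vertex-connected), so $H - r$ is just the connected subtree of that child. For any non-root vertex $v$ with child $w_j$, I consider the cut consisting of the subtree of $w_j$ together with the in-vertex $v_j$: the only arc entering the cut is the tree-arc $(v, v_j)$ with flow at least $1$, while the arcs leaving it are precisely the back-arcs from the subtree of $w_j$ to in-vertices corresponding to \emph{strict} ancestors of $v$. Flow conservation therefore forces at least one such back-arc to carry positive flow, which is exactly the back-edge of $H$ needed to keep $H - v$ connected. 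Second, the conditions of Definition~\ref{def:pairing} follow: each pair shares the endpoint $v$ whose degree in $H$ is at least $3$ (parent-edge, child-edge $\{v, w_j\}$, and the paired back-edge itself), and whichever edge of each pair is dropped, the subtree of $w_j$ stays connected to $v$ through the surviving edge, so by induction on subtrees the graph stays connected after any admissible deletion.

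Finally, I would count. Let $k$ denote the number of in-vertices with $f(B(v)) \geq 1$; flow conservation at the root ensures $k \geq 1$, and the construction produces at least $k - 1$ paired tree-edges, so $|R| \geq |B'| + k - 1$. Moreover, the definition of the cost gives $c(C^*) \geq \sum_v f(B(v)) - k \geq |B'| - k$. Combining these via Theorem~\ref{thm:main},
\[
\tfrac{4}{3}\,|E(H)| - \tfrac{2}{3}\,|R| \;\leq\; \tfrac{4}{3}(n - 1 + |B'|) - \tfrac{2}{3}(|B'| + k - 1) \;=\; \tfrac{4}{3}n + \tfrac{2}{3}(|B'| - k) - \tfrac{2}{3} \;\leq\; \tfrac{4}{3}n + \tfrac{2}{3}\,c(C^*) - \tfrac{2}{3}.
\]
The main delicacy will be the augmented-cut argument that promotes the generic flow lower bound of $1$ on tree-arcs into a back-edge reaching a \emph{strict} ancestor of $v$ rather than possibly $v$ itself; the $-\tfrac{2}{3}$ slack in the final bound is exactly what absorbs the potential failure to pair at the root.
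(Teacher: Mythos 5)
Your proof is correct and follows essentially the same route as the paper's: restrict to the support $H = T \cup B'$ of an integer min-cost circulation, establish $2$-vertex-connectivity of $H$ via flow conservation across tree-arcs, construct the removable pairing by matching one positive-flow back-arc per in-vertex with its outgoing tree-arc, and invoke Theorem~\ref{thm:main}. The only cosmetic difference is the bookkeeping at the root (the paper also pairs at the root when it has two or more incoming back-arcs, whereas you leave the root unpaired), but your inequality $c(C^*) \geq |B'| - k$ absorbs this and yields the identical bound.
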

\begin{proof}
  \newcommand{\flowT}{\ensuremath{\oa{T}}}
  \newcommand{\graphC}{\ensuremath{{C^*(G,T)}}} We first note that,
  for any arc of $C(G,T)$, the demand and the capacity is integral.
  Therefore, applying Hoffman's circulation theorem (see \cite{Sch03},
  Corollary 12.2a), we can assume the circulation $C^*$ to be
  integral.  Let $\graphC$ be the support of $C^*$ in $C(G,T)$,
  i.\,e., the induced subgraph of the arcs with non-zero circulation
  in $C^*$, and let $G'$ be the subgraph of $G$ obtained from
  $\graphC$ by compressing the gadges of the circulation network in
  the obvious way. 

  To prove the lemma, we shall first prove that graph $G'$ is
  $2$-vertex connected and then define a removable pairing $(R,P)$ on
  $G'$ in order to apply Theorem~\ref{thm:main}. That $G'$ is
  $2$-vertex connected follows from flow conservation, that each arc
  $a$ in $\oa{T}$ has demand $1$, and the design of the
  gadgets. Indeed, if $G'$ would have a cut vertex $v$ with children
  $w_1, w_2, \ldots, w_\ell$ in $T$ then one of the subtrees, say the
  one rooted by $w_j$, has no back-edges to the ancestors of $v$ which
  in turn, by flow conservation, would contradict that the tree-arc
  $(v, v_j)$ in $\oa{T}$ carries a flow of at least $1$. (Recall that
  the edge $\{v, w_j\}$ in $T$ is replaced by tree-arcs $(v, v_j)$ and
  $(v_j, w_j)$ in $\oa{T}$.)

  We now determine a \MS $(R,P)$ on $G'$. For ease of argumentation we
  shall first slightly abuse notation and define a \MS $(R_C, P_C)$ on
  $\graphC$.
  The set $P_C$ consists of all $(e,e')$ such that $e=(u,v)$ is a
  back-arc of cost zero in $\graphC$, $v$ has at least two incoming
  arcs, and $e'=(v,w)$ is a tree-arc.  Note that each such $v$ is an
  in-vertex, the number of incoming back-arcs of cost zero is at most
  one, $e'$ is the unique outgoing tree-arc of $v$, and the only
  possible vertex $v$ with only one incoming back-arc and no other
  incoming arc is the root.
  The set $R_C$ contains all edges from $P_C$ and additionally all
  remaining back-arcs of $\graphC$.  In other words, each edge of
  $\graphC$ that is neither in $\oa{T}$ nor in $P$ is a back-arc with
  integer non-zero cost in the circulation or a back-arc to the
  root. Hence, $|R_C| - 2|P_C| = c(C^*)$ if the root has more than one
  incoming back-arc and $|R_C|- 2|P_C| = c(C^*) +1$ otherwise.

  The \MS $(R,P)$ on $G'$ is now obtained from $(R_C, P_C)$, by
  mereley compressing the gadgets used to form $C(G,T)$ and by
  dropping the orientations of the arcs. As all edges in $R_C$ are
  either back-arcs or they are tree-arcs starting from an in-vertex,
  no arc in $R_C$ is removed by the compression and thus $|R|=|R_C|$
  and $|P| = |P_C|$. Moreover, $G'$ has $(n-1) + |R| - |P|$
  edges and, assuming $(R,P)$ is a valid \MS, Theorem~\ref{thm:main} 
  yields that $G'$ (and thus $G)$ has a spanning Eulerian multigraph
  with at most $ \frac{4}{3} ((n-1) + |R|-|P|) - \frac{2}{3} |R| =
  \frac{4}{3} n + \frac{2}{3}( |R| - 2|P|) - \frac{4}{3} \leq
  \frac{4}{3} n + \frac{2}{3} c(C^*) - \frac{2}{3} $ edges. The
  last inequality followed  from that $|R| - 2|P|$  is at most
  $c(C^*) +1$.
 
  Therefore, we can conclude the proof by showing that $(R,P)$ is a
  valid \MS.  It is easy to verify that $(R,P)$ satisfies the first
  two conditions of Definition~\ref{def:pairing}, that is, each edge
  is contained in at most one pair and the edges in each pair are
  incident to one common vertex of degree at least three. The third
  condition follows from that, for any vertex $v$ of $G'$, the
  vertices in the subtree $T_v$ of $T$ rooted by $v$ form a connected
  subgraph of $G'$ even after removing edges according to $(R,P)$. To
  see this we do a simple induction on the depth of $v$. In the base
  case, $v$ is a leaf and the statement is clearly true. For the
  inductive step, consider a vertex $v$ with $\ell$ children $w_1,
  w_2, \dots, w_\ell$ in $T$. By the inductive hypothesis, the
  vertices in $T_{w_j}$ for $j=1,2, \dots, \ell$ stay connected after
  the removal of edges according to $(R,P)$. To complete the inductive
  step it is thus sufficient to verify that $v$ is connected to each
  $T_{w_j}$ after the removal of edges. If $\{v,w_j\}$ is not in $R$
  this clearly holds. Otherwise if $e_j = \{v, w_j\} \in R$ then by
  the definition of $(R, P)$ there is an edge $e$ such that $(e, e_j)
  \in P$ and $e$ is incident to $v$ and a vertex in $T_{w_j}$. Since
  at most one edge in each pair is removed we have that $v$ also stays
  connected to $T_{w_j}$ in this case, which completes the inductive
  step. We have thus proved that $(R,P)$ satisfies the properties of a
  \MS which completes the proof of the statement.

\end{proof}

\section{Improved Approximation Algorithms}
\label{sec:algorithms}
We first show how to apply our framework to restricted graph classes
for which we obtain a tight bound on the integrality gap of the
Held-Karp relaxation. We then show how to use our framework to obtain
an improved approximation algorithm for general graphs.

\subsection{Bounded Degree and Claw-Free Graphs}
We consider the class of graphs that have a degree bounded by three.
\begin{lemma}\label{lem:boundeddeg}
  Given a $2$-vertex-connected graph $G$ with $n$ vertices, there is a
  polynomial time algorithm that computes a spanning Eulerian
  multigraph $H$ in $G$ with at most $4n/3 - 2/3$ edges.
\end{lemma}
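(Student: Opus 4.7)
The plan is to apply Lemma~\ref{lemma:costcirc}: for any DFS tree $T$, it yields a spanning Eulerian multigraph with at most $\frac{4}{3}n + \frac{2}{3}c(C^*) - \frac{2}{3}$ edges, where $c(C^*)$ is the cost of a minimum cost circulation in $C(G, T)$. It therefore suffices to find a DFS tree on which the min-cost circulation has cost zero; combined with the lemma this gives the claimed bound of $\frac{4}{3}n - \frac{2}{3}$ edges.

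The subcubic hypothesis will take care of the cost at every non-root in-vertex. For a non-root DFS-tree vertex $v$ with $\ell$ children, the degree bound forces at most $2-\ell$ back-edges to be incident to $v$, and any back-edge of $v$ coming from a specific subtree $T_{w_j}$ contributes to $B(v_j)$. Hence each in-vertex $v_j$ satisfies $|B(v_j)| \le 1$, so its cost contribution $\max[f(B(v_j))-1, 0]$ is zero under every integer feasible circulation.

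I would choose the DFS so that the root $r$ has a unique child, which is always possible: since $G - r$ is connected by 2-vertex-connectivity, running DFS from $r$ starting in one neighbour covers the rest of the graph in a single subtree. Under this choice, cost-zero at the root requires $f(B(r)) \le 1$, and flow conservation gives $f(B(r)) = f(r, w_1)$; combined with the tree-arc demand $\ge 1$ this pins $f(r, w_1) = 1$. Propagating this equality down the tree, any descendant with two or more DFS-children would need outgoing tree-arc flow at least two while only receiving one unit in, which is impossible; so each descendant has at most one child and the DFS tree must be a Hamiltonian path of $G$.

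The main obstacle is therefore to establish that every $2$-vertex-connected subcubic graph contains a Hamiltonian path and that one can be found in polynomial time, reducing Lemma~\ref{lem:boundeddeg} to this structural/algorithmic statement. Once such a path is in hand, running DFS along it gives $c(C^*)=0$ and Lemma~\ref{lemma:costcirc} produces the desired spanning Eulerian multigraph of size at most $\frac{4}{3}n - \frac{2}{3}$.
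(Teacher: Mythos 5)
Your plan to force $c(C^*)=0$ is both unnecessary and, as argued, incorrect at the propagation step. The flow entering an out-vertex $w$ is not the flow that leaves its parent $p$; it is what leaves the in-vertex $p_j$ sitting on the tree-arc between them, namely $f(p_j,w)=f(p,p_j)+f(B(p_j))$. A cost-zero circulation only constrains $f(B(p_j))\le 1$, so $w$ may receive up to $2$ units and comfortably feed two DFS-children, each of which may again be ``recharged'' one level lower by another incoming back-arc. So cost zero does not force the DFS tree to be a Hamiltonian path, and the reduction you arrive at is unsound. It is also a bad bargain: you would be trading the lemma for the much stronger claim that every $2$-vertex-connected subcubic graph is traceable, which is not something you have, and which is not believed to hold in general.

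The paper instead settles for $c(C^*)\le 1$, which is enough. Push one unit along every back-arc of $C(G,T)$ and down its tree path; $2$-connectivity guarantees every tree-arc lies on such a cycle, so demands are met. Your degree count gives at most one incoming back-arc at every non-root in-vertex, so those cost terms vanish, and the root---which has $G$-degree at most $3$ and exactly one tree child because $G$ is $2$-connected---has at most two incoming back-arcs, hence cost at most $1$ (and $0$ if the root has degree $2$). If the cost is $0$, Lemma~\ref{lemma:costcirc} directly gives $\tfrac{4}{3}n-\tfrac{2}{3}$. If the cost is $1$, the root has \emph{two} incoming back-arcs in the support; inspecting the proof of Lemma~\ref{lemma:costcirc}, in that case one of the root's back-arcs pairs with the root's outgoing tree-arc, so $|R|-2|P|=c(C^*)$ rather than $c(C^*)+1$, which yields $\tfrac{4}{3}n+\tfrac{2}{3}c(C^*)-\tfrac{4}{3}=\tfrac{4}{3}n-\tfrac{2}{3}$ again. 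No Hamiltonian path is ever required.
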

\begin{proof}
  If $G$ has one or two vertices, we obtain an Eulerian multigraph of
  zero or two edges. Otherwise, we compute a depth-first search tree
  $T$ in $G$ and determine the circulation network $C(G,T)$. We now
  show that this network has a feasible circulation $f$ of cost at
  most one. Let us assign a circulation of one to each back-arc $e$ in
  $C(G,T)$ and push it through the path in $\oa{T}$ that is incident
  to both the start and end vertex of $e$. By the construction of
  $C(G,T)$ and from the assumption that $G$ is $2$-vertex connected,
  each tree-arc is in a directed cycle that contains exactly one
  back-arc. Therefore, all demand constraints are satisfied. Due to
  the degree-bounds, no vertex but the root has more than one incoming
  back-arc. The cost $\sum_{v\in\mathcal{I}} \max[f(B(v))-1, 0]$ of
  the circulation is therefore at most one and zero if the root has
  only one back-arc.  If the circulation cost is zero, by
  Lemma~\ref{lemma:costcirc} we obtain a spanning Eulerian multigraph
  $H$ in $G$ with at most $4n/3-2/3$ edges. For those circulations
  where the cost is one, the proof of Lemma~\ref{lemma:costcirc}
  allows to save an additional constant of $2/3$ (since then the root
  has more than one incoming back-arc) and we obtain the same bound on
  the number of edges.

\end{proof}
Note that it is sufficient to find a 2-vertex-connected
degree three bounded spanning subgraph (a 3-trestle) and thus, using a
result from \cite{KKN01}, we can apply Lemma~\ref{lem:boundeddeg} also
to claw-free graphs.  Applying Lemma~\ref{lemma:2connTSP}, we obtain
an upper bound of 4/3 on the integrality gap for the Held-Karp
relaxation for the considered class of graphs. In addition, along the
lines of the proof of Lemma~\ref{lemma:2connTSP}, one can see that the
above arguments imply that any connected graph $G$ decomposed into $k$
blocks, i.\,e., maximal $2$-connected subgraphs, such that each block is
either degree three bounded or claw-free, has a spanning Eulerian
multigraph with at most $4n/3 + 2k/3- 4/3$ edges.

\subsection{General Graphs}
We now apply our framework to graphs without degree constraints.  We
start with an algorithm that achieves an approximation ratio better
than $3/2$ for graphs for which the linear programming relaxation has
a value close to $n$.  Let $G= (V,E)$ be an $n$-vertex graph.  The
support $E' = \{e: x_e^* > 0\}$ of an extreme point $x^*$ of \LP{G} is
known to contain at most $2n-1$ edges~(see Theorem~$4.9$
in~\cite{CFN85}). Moreover, if we let $x^*$ be an optimal solution,
then any $r$-approximate solution to graph $G'=(V,E')$ with respect to
\OLP{G'} is an $r$-approximate solution to $G$ with respect to
\OLP{G}, because $E' \subseteq E$ and $\OLP{G'} = \OLP{G}$. We can
thus restrict ourselves to $n$-vertex graphs with at most $2n-1$ edges
and, by Lemma~\ref{lemma:2connTSP}, we can further assume the graph to
be $2$-vertex connected.
\begin{algorithm}[h]
\begin{algorithmic}[1]
\REQUIRE A 2-vertex-connected graph $G$ with $n$ vertices and at most $2n-1$ edges.
\STATE Obtain an optimal solution $x^*$ to \LP{G}.
\STATE Obtain a depth-first-search tree $T$ of $G$ by starting at some
  root and in each iteration pick, among the possible edges, the edge $e$
  with maximum $x_e^*$.
\STATE Solve the min cost circulation problem $C(G,T)$ to obtain a circulation $C^*$ with cost $c(C^*)$.

\STATE Apply Lemma~\ref{lemma:costcirc} to find a spanning Eulerian
  multigraph with less than $\frac{4}{3} n + \frac{2}{3} c(C^*)$ edges.
\end{algorithmic}
\caption{}
\label{alg:allgraphs}
\end{algorithm}

To analyze the approximation ratio achieved by
Algorithm~\ref{alg:allgraphs}, we bound the cost of the circulation.
\begin{lemma}
\label{lemma:circcost}
  We have $c(C^*) \leq 6(1-\sqrt{2})n + (4\sqrt{2} -3)\OLP{G}$.
\end{lemma}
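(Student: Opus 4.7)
Since $c(C^*)$ is a \emph{minimum} cost circulation, it suffices to exhibit one feasible circulation of cost at most the right-hand side. Hoffman's theorem lets us restrict to integer circulations, and since the cost contribution $\max[f(B(v_j))-1,0]$ is nondecreasing in each back-arc flow one argues that an optimum is attained with 0/1 flow on each back-arc; the problem then reduces to picking a set $S$ of back-edges and paying $|S|-k$ where $k$ is the number of distinct in-vertices that receive at least one back-arc of $S$, subject to the feasibility constraint that for every non-root tree edge $\{v,w_j\}$ (with $w_j$ a child of $v$) the set $S$ contains a back-edge from $T_{w_j}$ to a strict ancestor of $v$, with the analogous constraint at the root.

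My plan is to build $S$ directly from the extreme-point optimum $x^*$ of $\LP{G}$ together with a threshold $\tau\in(0,1)$, then optimize $\tau$. I would exploit three ingredients. First, extreme points of the Held--Karp polytope satisfy $x^*_e\in[0,1]$ and have support of size at most $2n-1$, so at most $n$ back-edges exist with respect to the DFS tree $T$. Second, the subtour inequality applied to every subtree $T_w$ rooted at a child $w$ of a tree vertex $v$ gives $x^*(B_w)\ge 2-x^*_{vw}$, where $B_w$ are the back-edges leaving $T_w$; summing over all tree edges yields a global identity relating back-edge LP mass to $\OLP{G}-X_T$ (here $X_T=\sum_{e\in T}x^*_e$). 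Third, because step~2 of Algorithm~\ref{alg:allgraphs} grows $T$ by greedily taking the maximum-$x^*$ edge out of the current vertex, every back-edge whose upper endpoint is $v$ and whose lower endpoint lies in the subtree rooted at a child $w_j$ of $v$ satisfies $x^*_e\le x^*_{vw_j}$. Now set $S_\tau=\{e:x^*_e\ge\tau\}$ and patch by adding further back-edges one at a time to cover any still-uncovered tree arc. The cut inequalities bound both $|S_\tau|$ and the number of patches in terms of $\OLP{G}-X_T$, while the greedy-DFS bound ensures that every ``crowded'' in-vertex $v_j$ (receiving two or more back-edges of $S_\tau$) has its parent tree edge satisfying $x^*_{vw_j}\ge\tau$, so such crowded vertices pay their fair share of $X_T$. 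Combining these two estimates yields a cost bound of the form $A(\tau)\,n+B(\tau)\,\OLP{G}$, and optimizing $\tau$ (the critical value being related to $1-1/\sqrt{2}$) produces exactly the coefficients $6(1-\sqrt{2})$ and $4\sqrt{2}-3$.

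The main obstacle is the joint bookkeeping of the two cost sources --- the total $|S_\tau|+(\text{patching})$ on the one hand and the crowding at individual in-vertices on the other --- under a single threshold. The cut inequalities from (ii) are used to bound the former in terms of $\OLP{G}-X_T$, while the greedy-DFS property in (iii) is needed to charge the latter to $X_T$ instead of to unaccounted slack; both directions need to be tight simultaneously. Balancing the marginal gain of lowering $\tau$ (more back-edges pre-included, fewer patches) against its marginal loss (more crowded in-vertices) at exactly the right point is what extracts the $\sqrt{2}$-type constants.
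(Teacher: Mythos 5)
Your proposed route differs from the paper's in its construction: the paper does not threshold but instead exhibits a \emph{fractional} circulation $f=f'+f''$, where $f'$ pushes flow $\min[x^*_a,1]$ on \emph{every} back-arc $a$ (a cap, not a cutoff) and $f''$ fractionally patches the demand deficit $1-f'_{(v,w)}$ on each deficient tree-arc. The cost of $f''$ is then bounded by $\OLP{G}-n$ via a telescoping sum of subtour inequalities (Claim~\ref{claim:firstcost}), and the cost contribution of $f'$ is bounded by relaxing to a fractional knapsack whose optimal profit-to-size ratio $\max_{0\le T\le 1}\tfrac{1-T}{2-T}T$ is the source of the $\sqrt{2}$ (Claim~\ref{claim:secondcost}). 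The three LP facts you list --- support size at most $2n-1$, subtour inequalities applied to subtrees, and the greedy-DFS monotonicity $x^*_a\le x^*_{t_v}$ --- are exactly the ones the paper exploits, so your raw ingredients are correct.

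The central step of your sketch is, however, genuinely missing, and you yourself flag it as ``the main obstacle.'' Two concrete problems. First, your integer patching pays a full unit per uncovered tree-arc, whereas the paper's fractional $f''$ pays only the deficit $1-f'_{(v,w)}$; the subtour inequalities control the \emph{total deficit mass} $\sum(1-f'_{(v,w)})\le\OLP{G}-n$, but not the \emph{number} of tree-arcs whose exiting back-arcs all have $x^*<\tau$, and for $\tau$ of constant size these can differ by a $\Theta(n)$ term that would destroy the claimed bound (which is only about $0.17n$ when $\OLP{G}=n$). Second, the claim that crowded in-vertices ``pay their fair share of $X_T$'' is precisely the content of Claim~\ref{claim:secondcost}, which hinges on the decomposition $f'(B(v))=\ell_v+u_v$ with $\ell_v=\min[2-x^*(t_v),f'(B(v))]$ and the separate control $u^*\le 2(\OLP{G}-n)$; asserting that a single threshold handles both ends of this trade-off is not a substitute for carrying out that partition. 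A smaller inaccuracy: Hoffman's theorem gives integrality but not a $0/1$ restriction on back-arcs --- reducing a back-arc flow above $1$ forces rerouting through other back-arcs to maintain tree-arc demands, and one must check (true, but it needs an argument) that the rerouting does not increase cost.
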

\begin{proof}
  For notational convenience, when considering an arc $a$ in the flow
  network we shall slightly abuse notation and use $x^*_a$ to denote
  the value of the corresponding edge in $G$ according to the optimal
  LP-solution $x^*$.
  We prove the statement by defining a fractional circulation $f$ of
  cost at most $6(1-\sqrt{2})n + (4\sqrt{2} -3)\OLP{G}$. The
  circulation $f$ will in turn be the sum of two circulations $f'$ and
  $f''$. We obtain the circulation $f'$ as follows: for each back-arc
  $a$ we push a flow of size $\min[x^*_a,1]$ along the cycle formed by
  $a$ and the tree-arcs in $\oa{T}$.  We shall now define the circulation
  $f''$ so as to guarantee that $f$ forms a feasible circulation,
  i.\,e., one that satisfies the demands $f_a \geq 1$ for each $a\in
  \oa{T}$. As out- and in-vertices are alternating in $\oa{T}$ and
  in-vertices have only one child in $\oa{T}$ and no outgoing
  back-edges, a sufficient condition for $f$ to be feasible can be
  seen to be $f_a \geq 1$ for each $a\in \oa{T}$ that is from an
  out-vertex to an in-vertex. To ensure this, we now define $f''$ as
  follows. For each vertex $v$ of $G$ that is replaced by a gadget
  consisting of an out-vertex $v$ and a set $\mathcal{I}_v$ of
  in-vertices, we push for each $w\in \mathcal{I}_v$ a flow of size
  $\max[1-f'_{(v,w)}, 0]$ along a cycle that includes the arc $(v,w)$
  (and one back-arc).
  Note that such a cycle is guaranteed to exist since $G$ was assumed to be
  $2$-vertex connected.
  From the definition of $f''$, we have thus
  that $f=f'+f''$ defines a feasible circulation.

  We proceed by analyzing the cost of $f$, i.\,e., $\sum_{v\in
    \mathcal{I}} \max[f(B(v)) -1, 0]$, where $\mathcal{I}$ is the set
  of all in-vertices and $B(v)$ is the set of incoming back-arcs of
  $v\in \mathcal{I}$. Note that the cost is upper bounded by
  $\sum_{v\in \mathcal{I}} \max[f'(B(v)) -1, 0] + \sum_{v\in
    \mathcal{I}} f''(B(v))$ and we can thus analyze these two terms
  separately. We start by bounding the second summation and then
  continue with the first one. If $\OLP{G} = n$ then one can see that $f'' = 0$. Moreover,
  \begin{claim}
    \label{claim:firstcost}
    We have $\sum_{v\in \mathcal{I}} f''(B(v)) \leq \OLP{G}-n$.
  \end{claim}
  \begin{proofclaim}
    When considering a vertex $v$ as done above in the definition of
    $f''$, the flow pushed on back-arcs is $\sum_{w\in \mathcal{I}_v}
    \max[1-f'_{(v,w)}, 0]$ which equals $\sum_{w\in \mathcal{I}'_v}
    (1-f'_{(v,w)})$, where $\mathcal{I}'_v = \{w \in \mathcal{I}_v: f'_{(v,w)} < 1\}$. Letting $T_w$ be the set of vertices of $G$ in the
    subtree of the undirected tree $T$ rooted by the child of  $w\in \mathcal{I}'_v$,
    we have, by the definition of $f'$,
$$
f'_{(v,w)} = \sum_{a\in \delta(T_w)\setminus \delta(v)} \min[x^*_a, 1]
= x^*(\delta(T_w) \setminus \delta(v)).
$$
The second equality follows from that if $x^*_a > 1$ for some $a\in
\delta(T_w) \setminus \delta(v)$ then $f'_{(v,w)} \geq 1$ and hence $w\not \in \mathcal{I}'_v$. We have thus 
$
\sum_{w\in \mathcal{I}'_v} (1- f'_{(v,w)}) = |\mathcal{I}'_v| -
\sum_{w\in \mathcal{I}'_v} x^*(\delta(T_w) \setminus \delta(v)).
$
As we are considering a depth-first-search tree (see
Figure~\ref{fig:circcostOLA}),
\begin{align}
\label{eq:equalsums}
2 \sum_{w\in \mathcal{I}'_v} x^*(\delta(T_w) \setminus \delta(v)) &=  
\sum_{w\in \mathcal{I}'_v} x^*(\delta(T_w)) +
x^*\left(\delta\left(\bigcup_{w\in \mathcal{I}'_v} T_w \cup
    \{v\}\right)\right) - x^*(\delta(v)).
\end{align}
\begin{figure}[bt]
\begin{center}
\includegraphics[width=7cm]{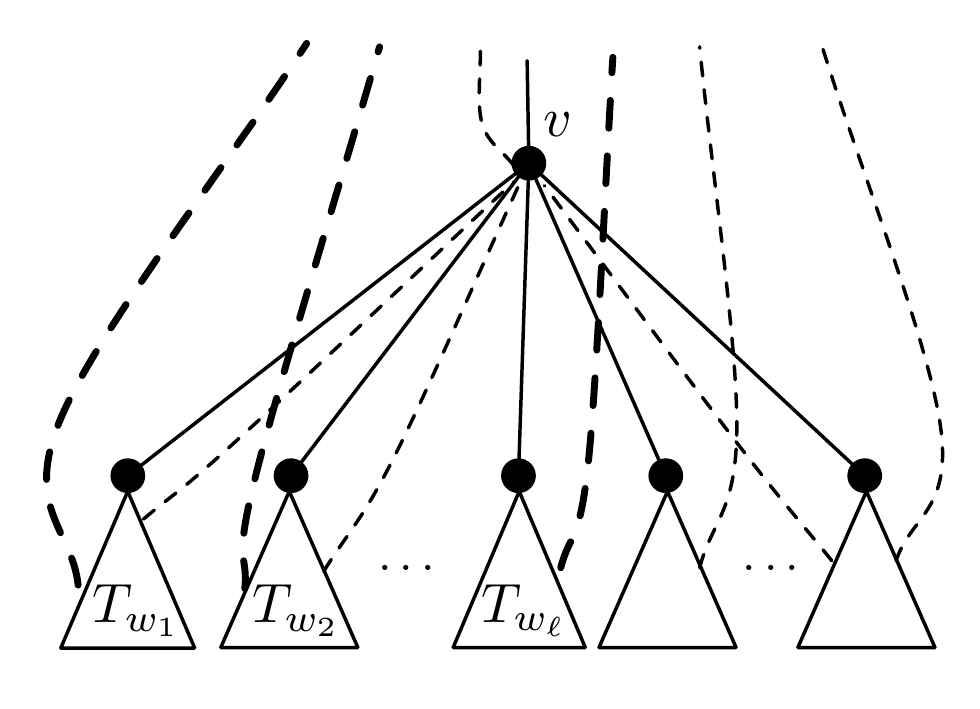}
\end{center}
\caption{An illustration of Equality~\eqref{eq:equalsums} with $\mathcal{I}'_v = \{w_1, w_2, \dots, w_\ell\}$: both the
  left-hand-side and the right-hand-side of the equality express two
  times the value of the fat edges.}
\label{fig:circcostOLA}
\end{figure}
Since by the feasibility of $x^*$ each of the sets corresponds to a cut of fractional value at least $2$ we use $2\cdot (|\mathcal{I}'_v| + 1) - x^*(\delta(v))$ as a lower bound on~\eqref{eq:equalsums}.

\noindent Summarizing the above calculations yields
$$
\sum_{w\in \mathcal{I}'_v} \left(1- f'_{(v,w)}\right) = |\mathcal{I}'_v| -
\sum_{w\in \mathcal{I}'_v} x^*(\delta(T_w) \setminus \delta(v)) \leq \frac{x^*(\delta(v))}{2} - 1.
$$

Repeating this argument for each $v$ we have $\sum_{v\in \mathcal{I}}
f''(B(v)) = \sum_{v \in V} \sum_{w \in \mathcal{I}'_v} \left(1-f'_{(v,w)}\right) \le
\sum_{v\in V} \left(\frac{x^*(\delta(v))}{2} -1\right)$,
which equals $\OLP{G}-n$ since $\OLP{G} = \frac{1}{2}\sum_{v\in V}
x^*(\delta(v))$.
\end{proofclaim}
We proceed by bounding $\sum_{v\in \mathcal{I}} \max[f'(B(v)) -1, 0]$ from above.  
\begin{claim}
\label{claim:secondcost}
We have $\sum_{v\in \mathcal{I}} \max[f'(B(v)) -1, 0] \leq (7-6\sqrt{2})n + 4(\sqrt{2}-1)\OLP{G}$
\end{claim}
\begin{proofclaim}
To analyze this expression we shall use two facts. First
$G$ has at most $2n-1$ edges, and therefore the number of back-arcs
is at most $2n-1 - (n-1) = n$.  Second, as the depth-first-search
chooses (among the available edges) the edge $a$ with maximum $x^*_a$ in
each iteration, we have that $x_{a}^* \leq x_{t_v}^*$ for each $a \in
B(v)$ where $t_v$ is the outgoing tree-arc of $v\in
\mathcal{I}$. Moreover, as $f'_a = \min[x^*_{a}, 1 ]$ for each
back-arc, the number of back-arcs in $B(v)$ is at least $\left\lceil
  \frac{f'(B(v))}{\min[x^*(t_v),1]}\right\rceil$.
Combining these two facts gives us that
 \begin{equation}
\label{eq:cap2}
\sum_{v\in \mathcal{I}}
\left\lceil \frac{f'(B(v))}{\min[x^*(t_v),1]}\right\rceil \leq n. 
\end{equation}
For $v\in \mathcal{I}$, we partition $f'(B(v))$ into $\ell_v =
\min[2-x^*(t_v), f'(B(v))]$ and $u_v = f'(B(v)) -
\ell_v$. Furthermore, let $u^* = \sum_{v\in \mathcal{I}} u_v$. With
this notation we can upper bound $\sum_{v\in \mathcal{I}}
\max[f'(B(v)) -1, 0]$ by
\begin{equation}
\label{eq:fcost}
\sum_{v\in \mathcal{I}} \max[\ell_v - 1, 0] + u^*
\end{equation}
and relax Inequality~\eqref{eq:cap2} to
\begin{equation}
\label{eq:cap}
\sum_{v\in \mathcal{I}} \frac{\ell_v}{x^*(t_v)} \leq  n- u^*.
\end{equation}

The cost~\eqref{eq:fcost} (where we ignore $u^*$) subject
to~\eqref{eq:cap} can now be interpreted as a knapsack problem of
capacity $n - u^*$ that is packed with an item of profit $ \max[\ell_v
-1, 0]$ and size $\ell_v/x^*(t_v)$ for each $v\in
\mathcal{I}$. Consequently, we can upper bound~\eqref{eq:fcost} by
considering the fractional knapsack problem with capacity $n-u^*$ and
infinitely many items of a maximized profit to size ratio. Associating
a variable $L$ with $\ell_v$ and $T$ with $x^*(t_v)$ this ratio is $
\max_{0\leq T \leq 1,0\leq L \leq 2 - T} \frac{L-1}{L}\cdot T.$ For
any $T$ the ratio is maximized by letting $L=2-T$ and we can thus
restrict our attention to items with profit to size ratio
$\max_{0\leq T \leq 1} \frac{1-T}{2-T} \cdot T$. A simple analysis (see Appendix~\ref{sec:maxsizratio})
shows that the maximum is achieved when $T= 2-
\sqrt{2}$. Therefore, the profit~\eqref{eq:fcost} is upper bounded by
$$
 \frac{\sqrt{2} -1}{\sqrt{2}} \cdot(2- \sqrt{2})\cdot (n-u^*) +u^* = (\sqrt{2}-1)^2 \cdot (n-u^*) + u^*.
$$
As the fractional degree of a vertex $v$ that is replaced by a gadget
with a set $\mathcal{I}_v$ of in-vertices is at least $2+\sum_{w\in
  \mathcal{I}_v} u_w$, we have $u^* \leq 2(\OLP{G}-n)$. Hence,
$$
\eqref{eq:fcost} \leq  (\sqrt{2}-1)^2 \cdot (n-2(\OLP{G}-n)) + 2(\OLP{G}-n),
$$
which equals $(7-6\sqrt{2})n + 4(\sqrt{2}-1)\OLP{G}$.
\end{proofclaim}
Finally, by summing up the bounds given by Claim~\ref{claim:firstcost}
and Claim~\ref{claim:secondcost} we bound the cost of $f$ and hence $c(C^*)$ from above by
$
 \OLP{G}-n + n(7-6\sqrt{2}) + 4(\sqrt{2}-1)\OLP{G},
$
which equals $6(1-\sqrt{2})n + (4\sqrt{2} -3)\OLP{G}$.
\end{proof}

Having analyzed Algorithm~\ref{alg:allgraphs}, we are ready to prove our main algorithmic result.
\begin{apptheorem}{\textbf{\ref{thm:approximationratio}}} (Restated)\emph{ 
  There is a polynomial time approximation algorithm for \TSP with
  performance guarantee $\frac{14\cdot( \sqrt{2}-1)}{12\cdot
    \sqrt{2}-13} < 1.461$.}
\end{apptheorem}

\begin{proof}
  By Lemma~\ref{lemma:2connTSP} and the discussion before
  Algorithm~\ref{alg:allgraphs}, we can restrict ourselves to
  $n$-vertex graphs that are $2$-vertex connected and have at most
  $2n-1$ edges.  The statement now follows by using
  Algorithm~\ref{alg:allgraphs} if $\OLP{G}$ is close to $n$ and
  otherwise by using Christofides' algorithm. 

  On the one hand, since Christofides' algorithm returns a solution
  with at most $n-1 + \OLP{G}/2$ edges (see~\cite{SW90} for an
analysis of Christofides' algorithm in terms of \OLP{G}), it has an approximation guarantee of at
  most
$$
\frac{n + \OLP{G}/2}{\OLP{G}}.
$$
On the other hand, by Lemma~\ref{lemma:circcost}, the approximation
guarantee of Algorithm~\ref{alg:allgraphs} is at most
$$
\frac{\frac{4}{3} n + \frac{2}{3} \left( 6(1-\sqrt{2})n + (4\sqrt{2} -3)\OLP{G} \right) }{\OLP{G}}.
$$
In particular, the approximation guarantee of
Algorithm~\ref{alg:allgraphs} for a graph $G$ with $\OLP{G}=n$ is $4/3 +
2/3\cdot (\sqrt{2}-1)^2 \approx 1.4477$ but deteriorates as \OLP{G}
increases. The approximation guarantee of Christofides' algorithm on
the other hand is getting better and better as \OLP{G} increases.
\begin{figure}[tb]
    \begin{center}
        \includegraphics[width=10cm]{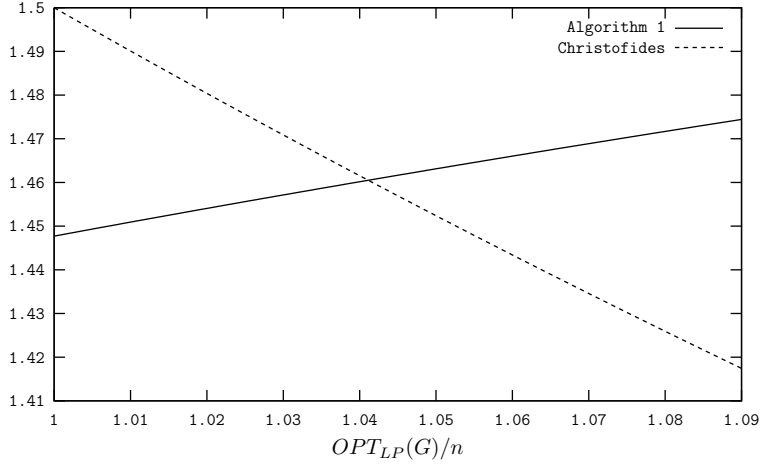}
    \end{center}
    \caption{The approximation ratios of
    Algorithm~\ref{alg:allgraphs} and Christofides' algorithm depending on
    the ratio $\OLP{G}/n$.}
    \label{fig:ratios}
\end{figure}
Comparing these two ratios, one gets that the worst case happens when
$\OLP{G} = \frac{24\sqrt{2}-26}{16\sqrt{2}-15} n$ (see Figure~\ref{fig:ratios}) and, by using simple
arithmetics, the approximation guarantee can be seen to be
$\frac{14(\sqrt{2}-1)}{12\cdot \sqrt{2}-13}$.
\end{proof}

\section{The Traveling Salesman Path Problem}\label{sec:tspp}
\label{sec:tspp}
In this section, we describe a sequence of generalizations and modifications of
the techniques that we previously presented for \TSP and conclude with improved
approximation algorithms for the traveling salesman path problem on graphic
metrics, \HPP.
\subsection{Using Held-Karp for Graph-TSPP}\label{sec:HKpath}
We can obtain a natural generalization of $\LP{G}$ to \HPP by distinguishing
whether the end vertices $s$ and $t$ are in the same set of
vertices. To this end, let $\Phi = \{S \subseteq V \mid \{s,t\}
\subseteq S \mbox{ or } S \cap \{s,t\} = \emptyset \}$.  Then the
relaxation can be written as
\begin{equation*}
\LP{G,s,t}: \qquad
\begin{aligned}[t]
  \min & \sum_{e\in E}  x_e \\[2mm]
  x(\delta(S)) & \geq  2, & \emptyset \neq S \subset V, S \in \Phi\\[2mm]
  x(\delta(S)) & \geq  1, & \emptyset \neq S \subset V, S \notin \Phi\\[2mm]
  x & \geq 0.
\end{aligned} 
\end{equation*}
We denote the optimum of this generalized linear program by
\OLP{G,s,t}.  It is not hard to see that $\OLP{G} = \OLP{G,s,s}$. 

The graph on the right-hand-side in Figure~\ref{fig:intgap} has a fractional
solution such that the
integrality gap of $\LP{G,s,t}$ is lower bounded by 1.5.

\begin{figure}[tb]
    \begin{center}
        \includegraphics[width=12cm]{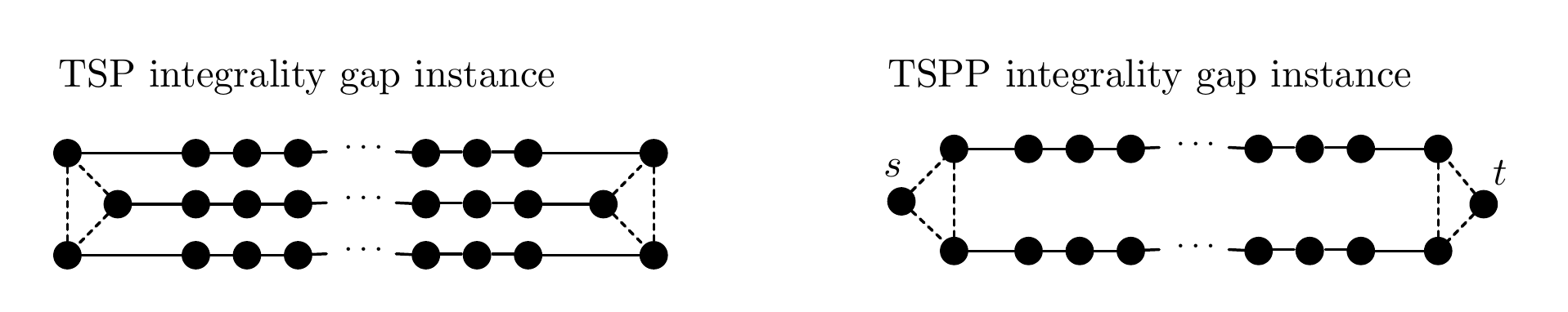}
    \end{center}
    \caption{Graphs for which the Held-Karp relaxation and the Held-Karp
    relaxation adapted to \HPP have an integrality gap tending to $4/3$ and $1.5$,
    respectively.}
    \label{fig:intgap}
\end{figure}

For a given graph $G=(V,E)$, let $G'=(V,E \cup \{e'\})$ be the graph obtained from
$G$ by inserting $e'=\{s,t\}$. Note that, given any solution $x$ to $\LP{G,s,t}$, we can
obtain a feasible solution to $\LP{G'}$ by adding 1 to $x_{e'}$. This way, for each of
the cuts where $S \notin \Phi$, we have $\delta(S) \ge 2$ and thus
$\OLP{G'} \le \OLP{G,s,t}+1$.
In the following, we will generalize our results for \TSP by using $\OLP{G'}-1$
as lower bound.

Similar to \TSP, we observe that the difficulty in approximating \HPP
lies in approximating those instances that are $2$-vertex connected.
The proof of this lemma can be found in Appendix~\ref{app:2connTSP}.
\begin{applemma}{\textbf{\ref{lemma:2connTSP}}} (Generalized) \emph{
Let $G$ be a graph and let $\mathcal{A}$ be an algorithm that, given a
$2$-vertex-connected subgraph $H$ of $G$ and $s,t \in V(H)$, returns a
\HPP{} solution to $(H,s,t)$ with cost at most $r \cdot
\OLP{H,s,t}$. Then there is an algorithm
$\mathcal{A}'$ that returns a \HPP{} solution to $(G,s,t)$ for any
$s,t\in V(G)$ with cost at most $r \cdot \OLP{G,s,t}$. Furthermore, the
running time of $\mathcal{A}'$ is a polynomial in the running time of
$\mathcal{A}$.}
\end{applemma}

\subsection{Generalization of the Approximation Framework to \HPP}

We generalize the framework to the problem \HPP. We obtain an
approximation ratio that depends on $\dist{s,t}$, the distance of $s$
and $t$. Therefore we can see the variant of Theorem~\ref{thm:main}
for \TSP as a special case where $s$ and $t$ have the distance 0.

\begin{apptheorem}{\textbf{\ref{thm:main}}} (Generalized) \emph{
  Given a $2$-vertex connected graph $G=(V,E)$ with a \MS{} $(R,P)$ and
  $s,t\in V$, there is a polynomial time algorithm that returns
  a spanning subgraph $H$ of $G$ with an Eulerian path between $s$ and $t$ with at most
  $\frac{4}{3}|E| - \frac{2}{3} |R| + \dist{s,t}/3$ edges.}
\end{apptheorem}
\begin{proof}
  A graph has an Eulerian path between $s$ and $t$ if and only if it
  is connected and the multigraph obtained by adding the edge $e'=\{s,t\}$ is
  a spanning Eulerian subgraph. Therefore, we basically want to apply
  (the original) Theorem~\ref{thm:main} and swap the degree of $s$ and $t$.

  To this end we create the graph $G'=(V,E')$ from $G$ by adding
  the edge $e'$ to $E$ if it is not already present in
  $G$. Then we apply Theorem~\ref{thm:main} to $G'$ with the removable
  pairing $(R, P)$ to obtain the spanning Eulerian subgraph $\tilde{G}$.
  
  If the Eulerian graph $\tilde{G}$ contains exactly one copy of $e'$,
  we simply remove it to obtain $H$. This case appears if and only if $e'$ was not chosen
  during the sampling, which happens with a probability of $2/3$. Note that the 
  2-edge-connectedness ensures that the removal does not disconnect $\tilde{G}$.
  
  Otherwise, with probability $1/3$, $\tilde{G}$ contains either two copies
  of $e'$ if $e' \notin R$ or none if $e' \in  R$. In either case we obtain $H$
  from $\tilde{G}$ by removing all copies
  of $e'$ and adding a shortest
  path of length exactly $\dist{s,t}$ to $\tilde{G}$.
  If $e' \in E$, we add a path with probability $1/3$ and apart from that we
  only remove edges; the claimed result follows immediately. 
  If $e' \notin E$, it is also not in $R$ and thus the path is
  added if and only if two edges are removed. Furthermore, with probability
  $2/3$, one edge is removed. Then the expected
  number of edges in $H$ is
\[
\frac{4}{3} (|E|+1) - \frac{2}{3} |R| + \frac{\dist{s,t}-2}{3} - 2/3 =
\frac{4}{3} |E| - \frac{2}{3} |R| + \frac{\dist{s,t}}{3}.
\]
 
Both the
removal of $e'$ and adding the shortest path swaps the parities
of $s$ and $t$, but of no other vertex.
\end{proof}

By using the generalized Theorem~\ref{thm:main} within the proof of
Lemma~\ref{lemma:costcirc}, we obtain immediately the following generalization.

\begin{applemma}{\textbf{\ref{lemma:costcirc}}} (Generalized) \emph{
  Given a $2$-vertex connected graph $G$, two vertices $s,t$ in $G$, and a depth first search tree
  $T$ of $G$, let $C^*$ be the minimum cost circulation to
  $C(G,T)$ of cost $c(C^*)$. Then there is a spanning multigraph $H$ of $G$ that
  has an Eulerian path between $s$ and $t$ with at
  most $\frac{4}{3}n + \frac{2}{3} c(C^*) - 2/3 + \dist{s,t}/3$ edges.}
\end{applemma}

\subsection{Approximation Algorithms for Graph-TSPP}
We are now equipped with the right tools to obtain algorithmic results for \HPP.
\begin{apptheorem}{\textbf{\ref{thm:approximationratiohpp}}} (Restated)\emph{
  For any $\varepsilon > 0$, there is a polynomial time approximation algorithm for
  \HPP with performance guarantee 
  $3-\sqrt{2} + \varepsilon < 1.586 + \varepsilon.$}

\emph{
If furthermore each block of the given graph is degree three bounded, there is a
polynomial time approximation algorithm for \HPP with performance
guarantee $1.5 + \varepsilon$, for any $\varepsilon>0$}.
\end{apptheorem}
\begin{proof}
By the generalized variant of Lemma~\ref{lemma:2connTSP}, it is sufficient to
show the theorem assuming that $G$ is 2-vertex connected.

If $G$ is degree three bounded, we apply Lemma~\ref{lem:boundeddeg} on $G$,
but use the generalized version of Lemma~\ref{lemma:costcirc} to obtain a
solution to \HPP that has at most $4n/3 - 2/3 + \dist{s,t}/3$ edges.
Additionally we may replace $\dist{s,t}$ by $n/2$, since in 2-vertex-connected graphs with more than two
vertices there are two vertex-disjoint paths between $s$ and $t$.

To obtain the claimed approximation ratio, we use the trivial lower bound $n-1$ of $\OLP{G,s,t}$.
For any $\varepsilon$, we determine a constant $n_0$ such that, for all $n \ge
n_0$, the approximation ratio is bounded from above by $1.5+\varepsilon$. If the graph has
fewer than $n_0$ vertices, we compute an optimal solution in constant time.

We continue with the case of general unweighted graphs.
As in the previous subsections, $e'=\{s,t\}$.
We apply Algorithm~\ref{alg:allgraphs} to obtain a
circulation $C'^*$ of $G'=(V,E \cup \{e'\})$ such that,
by Lemma~\ref{lemma:circcost},
$c(C'^*) \le 6(1-\sqrt{2})n + (4\sqrt{2} -3)\OLP{G'}$.
Using this circulation, we apply the generalized version of
Lemma~\ref{lemma:costcirc}. However, if $e' \notin E$ and it is used in the
solution (i.\,e., it was added as a shortest path), we have to replace $e'$ by a
shortest path between $s$ and $t$ in $G$.
This is equivalent to using $\dist{s,t}$ from $G$ instead of $G'$ in
Lemma~\ref{lemma:costcirc}. Therefore, in the following $\dist{s,t}$ always
refers to the distance in $G$ and
we obtain a solution to \HPP of at most
\begin{eqnarray*}
    &&\frac{4}{3}n + \frac{2}{3}(6(1-\sqrt{2})n + (4\sqrt{2} -3)\OLP{G'}) -
        \frac{2}{3} + \frac{\dist{s,t}}{3} \nonumber\\
    &=&(16/3 - 4\sqrt{2})n + \dist{s,t}/3 + (8 \sqrt{2}/3 - 2)(\OLP{G'}) - 2/3
\end{eqnarray*}
edges.

In the following, let $d=\dist{s,t}/n$ and $\zeta = (\OLP{G'}-1)/n$. Then, using
the lower bound $\OLP{G'}-1$ on $\OLP{G,s,t}$, the
approximation ratio achieved by our algorithm is at most
\begin{equation}\label{eqn:hppapprox}
\frac{16/3-4\sqrt{2}+d/3}{\zeta} + 8\sqrt{2}/3 - 2 + \epsilon_1,
\end{equation}
where $\zeta \ge 1 - 1/n$ and $\epsilon_1 = (8\sqrt{2}/3-8/3)/\OLP{G'}$.
In the following calculations, we omit $\epsilon_1$, since it decreases with the
input size. Similarly, we assume $\zeta \ge 1$. We will consider the deviation,
however, in the final result.

Since (\ref{eqn:hppapprox}) depends on $\zeta$, similar to the case of \TSP
we employ a second algorithm to obtain an upper bound independent of $\zeta$.

Let $\mathcal{A}$ be the following simple approximation algorithm for \HPP which
can be considered folklore.
First, $\mathcal{A}$ computes a spanning tree $T$ of cost $n-1$ in $G$. Then
$\mathcal{A}$ doubles all edges but those on the unique path between $s$ and $t$
in $T$.

The output of $\mathcal{A}$ is clearly a valid solution to \HPP and it computes
a solution of at most $2\cdot(n-1)-\dist{s,t}$ edges. Similar to
(\ref{eqn:hppapprox}), this results in an approximation ratio of at most
\begin{equation}\label{eqn:hppapproxtwo}
    (2-d)/\zeta.
\end{equation}

Note that for $\zeta=1$ and $d= \sqrt{2}-1$, disregarding $\varepsilon$,
(\ref{eqn:hppapproxtwo}) is the approximation ratio we are aiming for. Any
increase of $\zeta$ or $d$ can only improve this ratio. Therefore we may
restrict the analysis to values of $d$ in the range $[0,\sqrt{2}-1]$.

We will first analyze the approximation ratio depending on $d$ and determine
afterwards the value of $d$ where the minimum of the two
approximation ratios is maximized.

For any fixed $d$ within the considered range, (\ref{eqn:hppapprox}) is
monotonically increasing with respect to $\zeta$,
whereas (\ref{eqn:hppapproxtwo}) is monotonically decreasing. Since we are
interested in the minimum of the ratios, in the worst
case both ratios are equal. This happens when
\begin{equation}\label{eqn:equalhpp}
    \zeta = \frac{12\sqrt{2}-4d-10}{8\sqrt{2}-6}.
\end{equation}

We now replace $\zeta$
by (\ref{eqn:equalhpp}) in (\ref{eqn:hppapproxtwo}) to obtain the worst case
approximation ratio depending on $d$
\[
\frac{8\sqrt{2}-6-4d\sqrt{2}+3d}{6\sqrt{2}-2d-5}.
\]
Since this ratio can be seen to be monotonically increasing with respect to $d$ within the
considered range, the worst case appears when $d=\sqrt{2}-1$,
and thus we obtain as upper bound on the approximation ratio
\[
\frac{8\sqrt{2}-6-4(\sqrt{2}-1)\sqrt{2}+3(\sqrt{2}-1)}{6\sqrt{2}-2(\sqrt{2}-1)-5}
= \frac{15\sqrt{2}-17}{4\sqrt{2}-3} = 3 - \sqrt{2}.
\]

To conclude the proof, we still have to consider $\varepsilon_1$ and the case
where $\zeta<1$.
For any
$\varepsilon>0$, we determine an $n_0$ based on $\varepsilon_1$ and $\zeta$ similar to the degree bounded case and
solve \HPP on graphs with fewer than $n_0$ vertices exactly.
Altogether, we obtain an approximation ratio of
at most 
\[
3-\sqrt{2} + \varepsilon
\]
(see also Figure~\ref{fig:ratioshpp}).
\end{proof}
\begin{figure}[tb]
    \begin{center}
        \includegraphics[width=10cm]{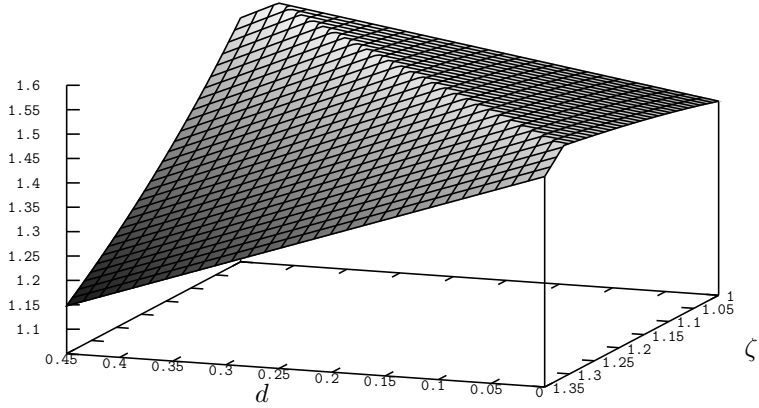}
    \end{center}
    \caption{The minimum of the approximation ratios (\ref{eqn:hppapprox}) and
    (\ref{eqn:hppapproxtwo}) depending on $d$ and $\zeta$.}
    \label{fig:ratioshpp}
\end{figure}
\section{Conclusions}
We have introduced a framework of removable pairings to find Eulerian
multigraphs.  This framework proved to be useful to obtain an
approximation algorithm for \TSP with an approximation ratio smaller
than $1.461$ and to obtain a tight upper bound on the integrality gap
of the Held-Karp relaxation for a restricted class of graphs that
contains degree three bounded and claw-free graphs. In particular, we
showed that in subcubic $2$-vertex-connected graphs we can always find a
solution to \TSP of at most $4n/3 - 2/3$ edges, which settles a
conjecture from \cite{BSSS11} affirmatively.

Our framework is not restricted to \TSP. With the same techniques
and a more detailed analysis, our result translates to the
traveling salesman path problem on graphic metrics with prespecified start and
end vertex.
In this way, one is guaranteed to obtain an approximation ratio smaller than $1.586$
and, for the degree three bounded case, the
approximation ratio gets arbitrarily close to $1.5$.

We note that the framework of removable pairings is straightforward
to generalize to general metrics, but the problem of finding a large
enough removable pairing in such graphs in order to improve on Christofides'
algorithm remains open.

\pagebreak
\appendix

\section{Example of Circulation Network}
\begin{figure}[hbtp]
\begin{center}
\includegraphics[width=12cm]{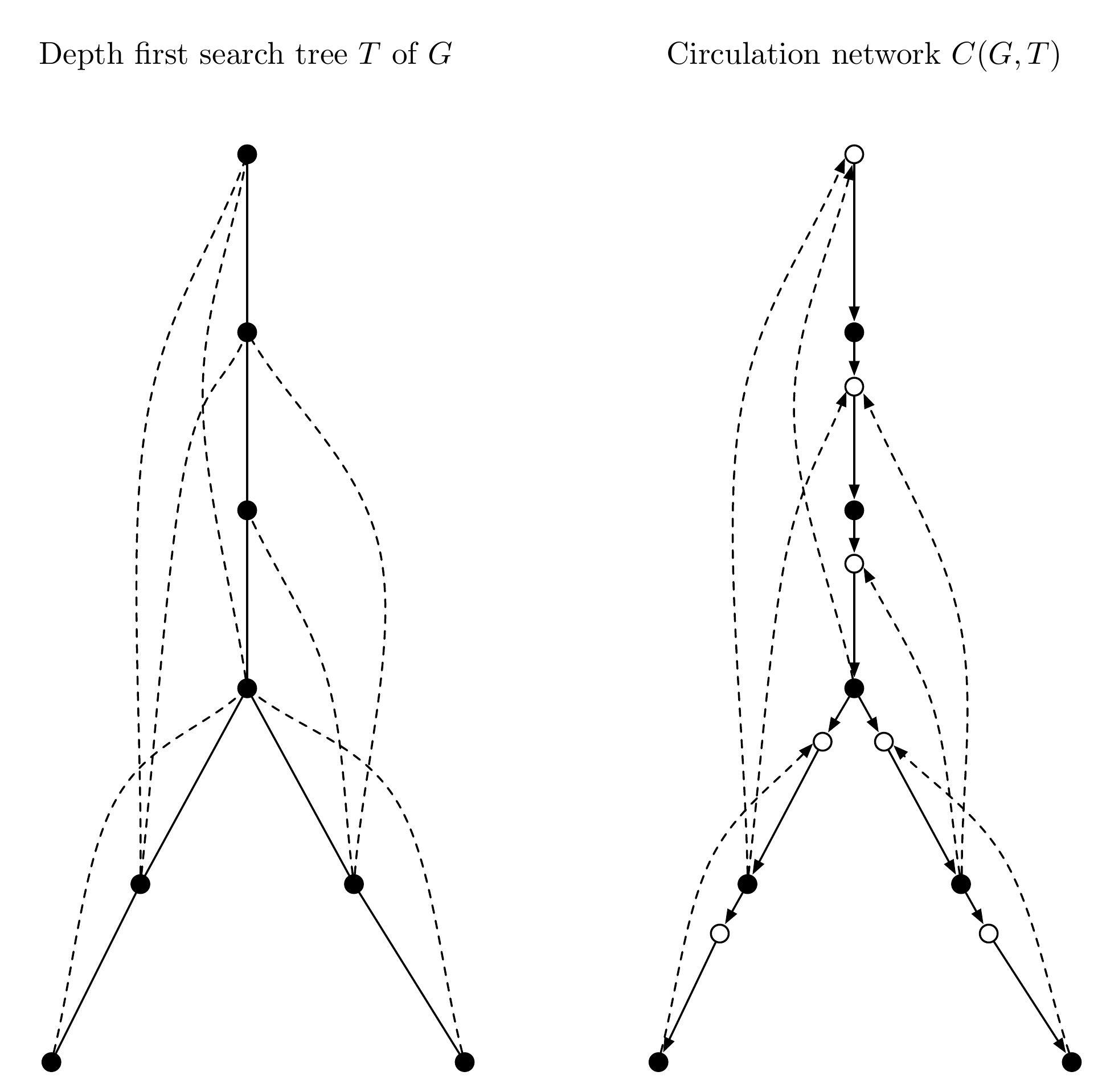}
\end{center}
\caption{The circulation network $C(G,T)$ of a graph $G$ with depth-first-tree $T$. In-vertices and out-vertices of the circulation network is depicted in white and black, respectively.}
\label{fig:circreplace_appendix}

\end{figure}

\section{Omitted Proofs}

\subsection{Proof of Lemma~\ref{lemma:2connTSP}}
\label{app:2connTSP}
We prove the more general lemma from Section~\ref{sec:tspp} that also applies to the traveling
salesman path problem.
\begin{applemma}{\textbf{\ref{lemma:2connTSP}}} (Restated) \emph{
Let $G$ be a graph and let $\mathcal{A}$ be an algorithm that, given a
$2$-vertex-connected subgraph $H$ of $G$ and $s,t \in V(H)$, returns a
\HPP{} solution to $(H,s,t)$ with cost at most $r \cdot
\OLP{H,s,t}$. Then there is an algorithm
$\mathcal{A}'$ that returns a \HPP{} solution to $(G,s,t)$ for any
$s,t\in V(G)$ with cost at most $r \cdot \OLP{G,s,t}$. Furthermore, the
running time of $\mathcal{A}'$ is a polynomial in the running time of
$\mathcal{A}$.}
\end{applemma}

\begin{proof}
  We define an $r$-approximation algorithm $\mathcal{A'}$ for $G$ as
  follows:
  \begin{enumerate}
  \item If $G$ is $2$-vertex connected then return the \HPP{} solution
    obtained by running $\mathcal{A}$ on $(G,s,t)$.
  \item Otherwise, let $v$ be a cut vertex whose removal results in
    components $C_1, C_2,\ldots,C_l$ with $l >1$. 
    Recursively run
      $\mathcal{A'}$ on the $l$ sub-instances $(G_1,s_1, t_1), \dots, (G_l, s_l, t_l)$ and return the union of the obtained solutions, where $G_i$  denotes the subgraph of $G$ induced by $C_i \cup \{v\}$, 
      $$s_i = \begin{cases} 
        s & \mbox{if }s\in C_i \\ 
        v & \mbox{otherwise}
      \end{cases}\qquad  \mbox{and} \qquad t_i = \begin{cases} 
        t & \mbox{if }t\in C_i \\ 
        v & \mbox{otherwise}
      \end{cases}.$$
      
  \end{enumerate}
  As a vertex is selected to be a cut vertex at most once,
  $\mathcal{A'}$ terminates in time bounded by a polynomial in the
  running time of $\mathcal{A}$.  It remains to verify that it returns
  a \HPP{} solution to $(G,s,t)$ with cost at most $r \cdot
  \OLP{G,s,t}$. We do so by induction on the depth of the
  recursion. In the base case no recursive calls are made so the
  solution is that returned by $\mathcal{A}$ which by assumption is a
  \HPP{} solution to $(G,s,t)$ with cost at most $r \cdot
  \OLP{G,s,t}$.

  Now consider the inductive step when a cut vertex $v$ of $G$ is
  selected whose removal results in components $C_1, C_2, \ldots, C_l$
  with $l > 1$.  Let $E_i$ be the multiset of edges of the obtained
  \HPP{} solution to $(G_i, s_i, t_i)$. With this notation the edge
  set returned by $\mathcal{A}'$ is $\bigcup_{i=1}^\ell E_i$ and we
  need to prove that 
  \begin{itemize}
  \item[(a)] it is a feasible \HPP{} solution to $(G, s,t)$, i.e, the
    edge set $\bigcup_{i=1}^\ell E_i \cup \{s,t\}$ forms a spanning
    Eulerian subgraph; and
  \item [(b)] $\sum_{i=1}^\ell |E_i| \leq r \cdot \OLP{G,s,t}.$
\end{itemize}

We start by proving (a). By the induction hypothesis, the edge set
$E_i \cup \{s_i, t_i\}$ forms a spanning Eulerian subgraph of $G_i$
and, consequently, $\bigcup_{i=1}^\ell \left(E_i \cup \{s_i,
  t_i\}\right)$ forms a spanning Eulerian subgraph of $G$.  That
$\bigcup_{i=1}^\ell E_i \cup \{s, t\}$ is a spanning Eulerian subgraph
of $G$ now follows from that the endpoints of $\{s_1,t_1\}, \{s_2,
t_2\}, \ldots, \{s_\ell, t_\ell\}$ 
can be partitioned so that one is $s$, one is $t$ and the
remaining $2(\ell-1)$ endpoints are $v$ ( possibly not different from
$s$ and $t$).

We proceed by proving (b).  By the induction hypothesis, 
$ \sum_{i=1}^\ell |E_i| \leq r \cdot \sum_{i=1}^\ell \OLP{G_i, s_i,
  t_i} $ and it is thus sufficient to prove $\sum_{i=1}^\ell \OLP{G_i,
  s_i, t_i} \leq \OLP{G,s,t}$.  To this end, Let $x$ be an optimal
solution to $\LP{G,s,t}$ and let $x^i$ denote its restriction to the
subgraph $G_i$ with start vertex $s_i$ and end vertex $t_i$.  By the
definition of $G_i, s_i, t_i$ and the fact that $v$ is a cut vertex,
it is easy to see that each constraint in $\LP{G_i, s_i, t_i}$ has an
identical constraint in $\LP{G,s,t}$. Therefore, $x^i$ corresponds to
a solution to \LP{G_i, s_i, t_i} and hence $\OLP{G,s,t} \geq
\sum_{i=1}^\ell \OLP{G_i,s_i,t_i}, $ which completes the inductive
step and the proof of the lemma.
\end{proof}

\subsection{Maximum Profit to Size Ratio}
\label{sec:maxsizratio}
We verify that $\max_{0\leq T\leq 1} \frac{1-T}{2-T} T$ is obtained
when $T= 2-\sqrt{2}$.  Let $f(T) = \frac{1-T}{2-T} T = \frac{T}{2-T} -
\frac{T^2}{2-T}$ and consider its first derivative
$$
\frac{d}{dT}f(T) = \frac{1}{2-T} + \frac{T}{(2-T)^2}  - \left(\frac{2T}{2-T} + \frac{T^2}{(2-T)^2}\right) = \frac{1-2T}{2-T} + \frac{T-T^2}{(2-T)^2}.
$$
From this it follows that $\frac{d}{dT}f(T) = 0$ when
$$
(1-2T)(2-T) + T-T^2 = 0 \Leftrightarrow T^2 - 4T + 2 = 0 \Leftrightarrow T = 2 \pm \sqrt{2}.
$$
It is now easy to verify that the unique maximum of $f(T)$ for $0\leq
T\leq 1$ is obtained when $T= 2-\sqrt{2}$.
\end{document}